\newtheorem{thm}{Theorem}[section]
\newtheorem{cor}[thm]{Corollary}
\newtheorem{lem}[thm]{Lemma}
\newtheorem{prop}[thm]{Proposition}
{\theoremstyle{definition}

\newtheorem{exa}[thm]{Example}

\newtheorem{rem}[thm]{Remark}
}
\newcommand{\R}{\mathbb R}
\newcommand{\zero}{\vec{0}}
\title{Representing Piecewise Linear Functions by Functions with Small Arity\thanks{The research reported in this paper has been partly funded by BMK, BMDW, and the Province of Upper Austria in the frame of the COMET Programme managed by FFG in the COMET Module S3AI.}}
\author{Christoph Koutschan, RICAM \\
        Bernhard Moser, SCCH \\
	Anton Ponomarchuk, RICAM \\
	Josef Schicho, RISC JKU} 
\begin{document}

\maketitle

\begin{abstract}
    %Neural networks with ReLU activation function represent piecewise linear functions. 
    A piecewise linear function can be described in different forms: as an arbitrarily nested expression of $\min$- and $\max$-functions, as a difference of two convex piecewise linear functions, or as a linear combination of maxima of affine-linear functions. 
    %Different representations underline specific properties of the given piecewise linear function. 
    In this paper, we provide two main results: first, we show that for every piecewise linear function there exists a linear combination of $\max$-functions with at most $n+1$ arguments, and give an algorithm for its computation. Moreover, these arguments are contained in the finite set of affine-linear functions that coincide with the given function in some open set. Second, we prove that the piecewise linear function $\max(0, x_{1}, \ldots, x_{n})$ cannot be represented as a linear combination of maxima of less than $n+1$ affine-linear arguments. This was conjectured by Wang and Sun in 2005 in a paper on representations of piecewise linear functions as linear combination of maxima.
\end{abstract}

\section{Introduction}
%\section{Motivation: The Depth of a ReLU Network}

The mathematical model of a neural network is a directed graph without cycles~\cite{Goodfellow:16}, where each vertex with in-degree~$0$ stands for an input parameter ranging over~$\R$, and vertices with out-degree~$0$ stand for the output values. Each vertex with positive in-degree is called a neuron. Each neuron has finitely
many input values corresponding to the incoming edges. The neuron applies an affine-linear function to the vector of these input values, followed by a non-linear activation function. We assume that the activation function is the function $x\mapsto\max(x,0)$, also known as the ReLU function (Rectified Linear Unit, see \cite{Hinton_Nair:10}). 
The output of one neuron may be the input for another neuron, which is indicated by a directed edge between the two vertices in the graph. Figure~\ref{fig:nn} shows a ReLU network that computes a bivariate piecewise linear function.

\begin{figure}[h]
\begin{center}
\begin{tikzpicture}[scale=1.5]
  \node[circle, minimum size=25, draw] (n11) at (0,3) {$x_1$};
  \node[circle, minimum size=25, draw] (n12) at (0,1) {$x_2$};
  \node[circle, minimum size=25, draw] (n21) at (3,4) {$2$};
  \node[circle, minimum size=25, draw] (n22) at (3,2) {$-3$};
  \node[circle, minimum size=25, draw] (n23) at (3,0) {$1$};
  \node[circle, minimum size=25, draw] (n31) at (6,2) {$8$};
  \draw[->,thick] (n11) to node[above] {$-1$} (n21);
  \draw[->,thick] (n11) to node[pos=0.4,above] {$2$} (n22);
  \draw[->,thick] (n12) to node[pos=0.3,below] {$3$} (n21);
  \draw[->,thick] (n12) to node[above] {$5$} (n23);
  \draw[->,thick] (n21) to node[above] {$4$} (n31);
  \draw[->,thick] (n22) to node[above] {$-5$} (n31);
  \draw[->,thick] (n23) to node[above] {$6$} (n31);
\end{tikzpicture}
\caption{A neural network computing the piecewise linear function
$(x_1,x_2)\mapsto 4\max(-x_1+3x_2+2,0)-5\max(2x_1-3,0)+6\max(5x_2+1,0)+8$.
The activation function for each neuron is the ReLU function $x\mapsto\max(0,x)$.
The network has depth 2.
}
\label{fig:nn}
\end{center}
\end{figure}
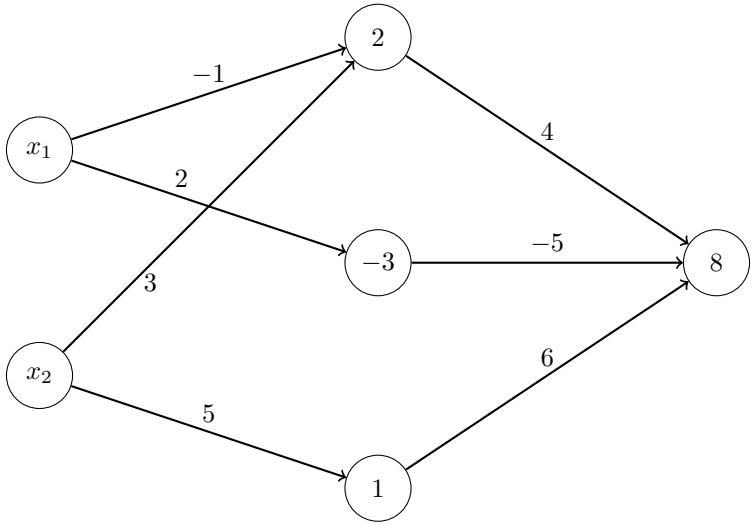

Algebraically, the depth of a ReLU network corresponds to the depth of nestings of ReLU functions in the expression
determined by the network. The function $\max(x_1,\dots,x_n)$ can be written as a composition of binary $\max$-function with nesting depth
equal to $\lceil\log_2(n)\rceil$, and every binary $\max$ can be written in terms of a ReLU function by the identity
$\max(x,y)=x+\max(y-x,0)$. This shows that any piecewise linear function that can be written as a linear combination
of maxima of at most $n$ affine-linear functions can be realized by a ReLU network of depth at most $\lceil\log_2(n)\rceil$.
This was already observed in \cite{Wang_Sun:05}.

It is well-known~\cite{Kripfganz_Schulze:87} that every piecewise linear function is the difference of two
convex piecewise linear functions; see also \cite{Schlueter_Darup:21} for a more efficient decomposition algorithm.
It follows that every piecewise linear function is a linear combination of maxima of affine-linear functions, see~\cite{Chua:88, Tarela_Martines:99, Ofchinnikov:02, Arora:18}.
The paper \cite{Wang_Sun:05} addresses the problem of minimizing the largest number of arguments of the maxima appearing
in such a linear combination. 
%The question is motivated by the observation that piecewise linear functions are
%exactly the functions that can be realized by neural networks with ReLU activation functions.
%The depth of a ReLU network can be bounded by the logarithm of the largest number of $\max$-arguments.
The authors show that every piecewise linear function defined on $\R^n$ can be written
as a linear combination of maxima of at most $n+1$ affine-linear functions. The authors also express their
conviction that this bound is optimal. In their own words, ``it seems impossible'' that the function $\max(0,x_1,\dots,x_n)$
has an expression as a linear combination of maxima of at most $n$ affine-linear functions.

This paper is structured as follows. In Section~\ref{sec:height_bound} we show that for any piecewise linear function $f\colon \R^{n}\rightarrow \R$ there exists an integral linear combination of maxima of at most $n+1$ affine-linear functions. Note that in the case $n=1$ this corresponds to the fundamental theorem of tropical algebra. In Section~\ref{sec:exmp} we give an algorithm for finding such linear integral combination of maxima with examples. In Section~\ref{sec:duality} we recall some notions about Minkowski-addition, duality between the set of non-empty convex polytopes in $\R^{n}$ and the set of convex and positively homogeneous piecewise linear functions of degree~$1$. 
In Section~\ref{sec:lb} we give a proof for the conjecture in \cite{Wang_Sun:05}, thereby showing that their bound for the number of $\max$-arguments is indeed optimal. The proof is based on properties of the Minkowski-addition
of convex polytopes.

\section{An Upper Bound for the Height}
\label{sec:height_bound}
For a given piecewise linear function $f\colon\R^n\to\R$, we define the height $H(f)$ as the smallest integer~$k$ such that $f$ is 
a linear combination of maxima of at most $k+1$ affine-linear functions. In this section we give a new proof for the bound $H(f)\le n$
that was first shown in \cite{Wang_Sun:05}. We also show that the arguments for the maxima can be chosen
among the constituents of $f$. These are defined as the affine-linear functions that coincide with $f$ in some open subset of~$\R^n$.

\begin{lem} \label{lem:ST}
Let $R$ be a finite set of affine-linear functions on $\R^n$ such that $|R|\ge n+2$. Then there exists a decomposition
into two non-empty disjoint subsets, $R=S\uplus T$, such that for all points $x\in\R^n$, we have 
\[ \max_{g\in T}g(x)\ge \min_{g\in S}g(x) . \]
\end{lem}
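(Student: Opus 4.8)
The plan is to reduce the statement to an elementary fact of linear algebra about the coefficient vectors of the functions in $R$, combined with the trivial observation that a convex combination of real numbers lies between their minimum and their maximum. Write each $g \in R$ as $g(x) = \langle a_g, x\rangle + b_g$ with $a_g \in \R^n$ and $b_g \in \R$, and associate to it the vector $\widehat g := (1, a_g) \in \R^{n+1}$. Since $|R| \ge n+2 > n+1 = \dim \R^{n+1}$, the family $(\widehat g)_{g\in R}$ is linearly dependent, so there are reals $(\lambda_g)_{g\in R}$, not all zero, with $\sum_{g \in R} \lambda_g \widehat g = \vec 0$. The first coordinate of this relation reads $\sum_{g} \lambda_g = 0$, and the remaining coordinates read $\sum_g \lambda_g a_g = \vec 0$; hence $\sum_{g\in R} \lambda_g g$ is the constant function $c := \sum_g \lambda_g b_g$. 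After possibly replacing $(\lambda_g)$ by $(-\lambda_g)$ we may assume $c \le 0$.

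Next I would read off the partition from the signs of the $\lambda_g$: set $S := \{g \in R : \lambda_g > 0\}$ and $T := R \setminus S$. Since $\sum_g \lambda_g = 0$ and the $\lambda_g$ are not all zero, at least one is positive and at least one is negative, so $S$ and $T$ are both nonempty and $R = S \uplus T$. Put $\sigma := \sum_{g \in S}\lambda_g > 0$, so that $\sum_{g\in T}\lambda_g = -\sigma$, and define weights $\mu_g := \lambda_g/\sigma \ge 0$ for $g\in S$ and $\nu_g := -\lambda_g/\sigma \ge 0$ for $g \in T$; each family sums to $1$. Then for every $x \in \R^n$ we have $\min_{g\in S} g(x) \le \sum_{g\in S}\mu_g\, g(x) = \tfrac1\sigma\sum_{g\in S}\lambda_g g(x)$ and $\max_{g\in T}g(x) \ge \sum_{g\in T}\nu_g\, g(x) = -\tfrac1\sigma\sum_{g\in T}\lambda_g g(x)$. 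Subtracting these two inequalities and using $\sum_{g\in R}\lambda_g g \equiv c$ yields $\min_{g\in S}g(x) - \max_{g\in T}g(x) \le \tfrac1\sigma\sum_{g\in R}\lambda_g g(x) = c/\sigma \le 0$, which is exactly the asserted inequality.

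I do not expect a genuine obstacle; the one thing that must be gotten right is to impose the extra constraint $\sum_g \lambda_g = 0$ alongside $\sum_g\lambda_g a_g = \vec 0$, so that after normalization the collapsed relation restricts to a true convex combination on each of $S$ and $T$, and the two sides differ by a constant. This is also precisely what raises the relevant dimension from $n$ to $n+1$, matching the hypothesis $|R|\ge n+2$ exactly; with only the $n$-dimensional condition $\sum_g\lambda_g a_g = \vec 0$ the sums $\sum_{g\in S}\lambda_g$ and $\sum_{g\in T}\lambda_g$ need not be opposite, and the comparison breaks. A minor point is the treatment of indices with $\lambda_g = 0$: these are harmlessly absorbed into $T$ (with $\nu_g = 0$), while $S$ stays nonempty because some $\lambda_g$ is strictly positive.
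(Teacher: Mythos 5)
Your proof is correct and follows essentially the same route as the paper: both extract a relation $\sum_g \lambda_g a_g = 0$, $\sum_g \lambda_g = 0$ from the dependence of the lifted vectors (the paper phrases this as affine dependence of the derivatives in $(\R^n)^\ast$), partition $R$ by the sign of $\lambda_g$, normalize to get convex combinations on each side, and compare via the constant $c$. The only differences are cosmetic: you fix the sign of $c$ by negating the $\lambda_g$ up front, whereas the paper swaps the roles of $S$ and $T$ at the end.
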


\begin{proof}
The derivative of every function in $R$ is a linear function in $(\R^n)^\ast$.
The vector space $(\R^n)^\ast$ has dimension $n$. Therefore the set of derivatives $\{g'\mid g\in R\}$ is affinely dependent, i.e.,
there exist real numbers $\alpha_g$, $g\in R$, not all equal to zero, such that
\[
  \sum_{g\in R} \alpha_gg'=0, \quad \sum_{g\in R} \alpha_g = 0.
\] 
We set $T:=\{ g\mid \alpha_g>0\}$ and $S:=\{ g\mid \alpha_g\le 0\}$.
Without loss of generality, we may assume $\sum_{g\in T} \alpha_g=1$ and $\sum_{g\in S} \alpha_g=-1$ -- if not, we multiply all 
$\alpha_g$ by a suitable positive constant. We also set $\beta_g:=-\alpha_g$ for $g\in S$.
Then the function $h:=\sum_{g\in T}\alpha_gg-\sum_{g\in S}\beta_gg$ has derivative zero and therefore $h$ equals to a constant~$c$.
Let us assume $c\ge 0$. Then we get
\[
  \max_{g\in T}g(x) \ge
  \sum_{g\in T}\alpha_gg(x) =
  \sum_{g\in S}\beta_gg(x)+c \ge
  \sum_{g\in S}\beta_gg(x) \ge \min_{g\in S}g(x)
\]
for all $x\in\R^n$.

If $c<0$, then we redefine $T:=\{ g\mid \alpha_g<0\}$ and $S:=\{ g\mid \alpha_g\ge 0\}$ and get a similar chain of inequalities.
\end{proof}

\begin{lem} \label{lem:binomi}
Let $A$ be a finite non-empty set. Then
\[ \sum_{T\subseteq A} (-1)^{|T|} = 0 . \]
\end{lem}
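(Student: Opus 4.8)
The plan is to reduce the claim to the binomial theorem. First I would sort the subsets $T\subseteq A$ by their size: for each integer $k$ with $0\le k\le |A|$ there are exactly $\binom{|A|}{k}$ subsets $T\subseteq A$ with $|T|=k$. Grouping the sum accordingly gives
\[
  \sum_{T\subseteq A}(-1)^{|T|}=\sum_{k=0}^{|A|}(-1)^k\binom{|A|}{k}.
\]
Then I would recognize the right-hand side as the binomial expansion of $\bigl(1+(-1)\bigr)^{|A|}=0^{|A|}$. Since $A$ is non-empty we have $|A|\ge 1$, so $0^{|A|}=0$, which yields the claim. It is worth noting where the hypothesis enters: for $A=\emptyset$ the sum equals $0^0=1$, so non-emptiness is genuinely needed.

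As an alternative I could give a sign-reversing involution argument, which avoids invoking the binomial theorem. Fix an element $a\in A$, which is possible precisely because $A\ne\emptyset$, and consider the map $\phi$ on the power set of $A$ defined by $\phi(T)=T\mathbin{\triangle}\{a\}$, i.e.\ $\phi$ deletes $a$ from $T$ if $a\in T$ and inserts $a$ otherwise. Then $\phi$ is a bijection with $\phi\circ\phi=\mathrm{id}$ and no fixed points, and $\bigl||\phi(T)|-|T|\bigr|=1$, so $(-1)^{|\phi(T)|}=-(-1)^{|T|}$. Pairing each $T$ with $\phi(T)$ partitions the power set into two-element blocks on which the summand contributions cancel, whence the total sum is $0$.

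I expect no real obstacle here: the statement is a classical identity and either approach is a couple of lines. The only point requiring the slightest care is to keep track of the non-emptiness assumption — in the first proof it guarantees $0^{|A|}=0$ rather than $0^0=1$, and in the second it guarantees that an element $a$ is available to define the involution.
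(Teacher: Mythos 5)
Your first argument is exactly the paper's proof: group subsets by cardinality and apply the binomial theorem to $(1-1)^{|A|}$, with non-emptiness ensuring the exponent is positive. The sign-reversing involution you sketch as an alternative is also correct, but the main route coincides with the paper's, so there is nothing further to reconcile.
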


\begin{proof}
\[
  \sum_{T\subseteq A} (-1)^{|T|} = \sum_{i=0}^{|A|} \binom{|A|}{i} (-1)^i = (1-1)^{|A|} = 0 .
  \qedhere
\]
\end{proof}

\begin{lem} \label{lem:id}
With $S,T$ as above, we have the equality
\[ \sum_{M\subseteq S} (-1)^{|M|} \max_{g\in M\cup T} g(x) = 0 \]
for all $x\in\R^n$.
\end{lem}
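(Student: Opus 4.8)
The plan is to fix an arbitrary point $x\in\R^n$ and use the inequality from Lemma~\ref{lem:ST} to split off the part of $S$ that is ``invisible'' inside all of the maxima, so that the alternating sum factors and one factor vanishes by Lemma~\ref{lem:binomi}.

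Concretely, I would first set $t:=\max_{g\in T}g(x)$ and partition $S=S_0\uplus S_1$, where $S_0:=\{g\in S:g(x)\le t\}$ and $S_1:=S\setminus S_0$. Lemma~\ref{lem:ST} says precisely that $t\ge\min_{g\in S}g(x)$, so any $g\in S$ attaining this minimum lies in $S_0$; hence $S_0\neq\emptyset$. This non-emptiness is the only place where the hypotheses on $S,T$ enter, and it is the crux of the argument.

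Next I would observe that for every $M\subseteq S$, writing $M_0:=M\cap S_0$ and $M_1:=M\cap S_1$, we have
\[ \max_{g\in M\cup T}g(x)=\max\Bigl(t,\ \max_{g\in M_1}g(x)\Bigr), \]
because every $g\in M_0\subseteq S_0$ satisfies $g(x)\le t$ and so contributes nothing to the maximum. In particular the left-hand side depends only on $M_1$; call this common value $\Phi(M_1)$. Summing over all $M\subseteq S$ is the same as summing independently over $M_0\subseteq S_0$ and $M_1\subseteq S_1$, and since $|M|=|M_0|+|M_1|$ the sum factors:
\[ \sum_{M\subseteq S}(-1)^{|M|}\max_{g\in M\cup T}g(x)=\Bigl(\sum_{M_0\subseteq S_0}(-1)^{|M_0|}\Bigr)\Bigl(\sum_{M_1\subseteq S_1}(-1)^{|M_1|}\Phi(M_1)\Bigr). \]
By Lemma~\ref{lem:binomi} the first factor is $0$, since $S_0$ is a finite non-empty set, so the whole expression vanishes. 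As $x$ was arbitrary, this is the claimed identity.

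I expect no real obstacle beyond keeping track of degenerate cases (for instance $M_0$ or $M_1$ empty, so that a maximum over the empty set appears — this is harmless because $T\neq\emptyset$ keeps $M\cup T$ non-empty and the value $t$ is always present in the maximum). The essential insight is simply that the affine-linear functions in $S$ lying pointwise below $\max_{g\in T}g$ can be toggled in or out of $M$ without changing any of the maxima, which collapses the alternating sum.
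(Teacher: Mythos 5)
Your proof is correct, and it takes a genuinely different route from the paper's. The paper regroups the alternating sum by which constituent $u\in R$ attains the maximum of $M\cup T$ at $x$; this forces a genericity assumption (all values $g(x)$ pairwise distinct), a continuity argument to recover the remaining points, and a three-case analysis of the inner coefficient sums. You instead partition $S$ into $S_0=\{g\in S: g(x)\le \max_{h\in T}h(x)\}$ and its complement, observe that members of $S_0$ can be added to or removed from $M$ without changing $\max_{g\in M\cup T}g(x)$, and factor the whole sum as a product in which the factor $\sum_{M_0\subseteq S_0}(-1)^{|M_0|}$ vanishes by Lemma~\ref{lem:binomi} once $S_0\neq\emptyset$. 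Both arguments invoke Lemma~\ref{lem:ST} at exactly one point --- to guarantee that some element of $S$ lies below $\max_{h\in T}h(x)$ (this is the paper's Case~3) --- and both reduce to the binomial identity; but your factorization works verbatim at every $x$, with no genericity or continuity step and no casework, so it is the more economical argument. The only bookkeeping worth making explicit is that $T\neq\emptyset$ (which Lemma~\ref{lem:ST} guarantees), so that $\max_{g\in M_1\cup T}g(x)$ is always a maximum over a non-empty set; you have noted this.
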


\begin{proof}
For all $x\in\R^n$ except those in a finite union of hyperplanes, the values $g(x)$, $g\in R$ are pairwise distinct. 
We only need to prove the equality for $x$ under this assumption; then it follows for the remaining places by continuity.
So, let us fix $x$ such that the values $g(x)$, $g\in R$ are pairwise distinct. Then
\[ \sum_{M\subseteq S} (-1)^{|M|} \max_{g\in M\cup T} g(x) = \sum_{u\in R}  \left(
	\sum_{\substack{ 
		M\subseteq S \\
		u\in M\cup T \\
		\forall g\in M\cup T: g(x)\le u(x)
	}} 
  (-1)^{|M|} \right) u(x) \]
We claim the inside sum is~$0$, for each $u\in R$. 
We distinguish three cases.

Case 1: $u\in T$ and $\exists g\in T: g(x)> u(x)$. 
Then there is no $M$ such that the conditions under the sum sign are fulfilled. Then the inner sum is the sum over the empty set which is $0$ by definition.

Case 2: $u\in S$. 
We define $S_u$ as the set of all $g\in S$ such that $g(x)\le u(x)$.
Note that $S_u$ is non-empty, since it contains $u$. Then
\[ \sum_{\substack{ 
                M\subseteq S \\
                u\in M\cup T \\
                \forall g\in M\cup T: g(x)\le u(x)
        }} (-1)^{|M|} = \sum_{M\subseteq S_u} (-1)^{|M|} = 0 \]
by Lemma~\ref{lem:binomi}.

Case 3: $u\in T$ and $\forall g\in T: g(x)\le u(x)$.
We define $S_u$ as the set of all $g\in S$ such that $g(x)\le u(x)$. 
The set $S_u$ is non-empty because of Lemma~\ref{lem:ST}. As in the previous case, we get
\[ \sum_{\substack{ 
                M\subseteq S \\
                u\in M\cup T \\
                \forall g\in M\cup T: g(x)\le u(x)
        }} (-1)^{|M|} = \sum_{M\subseteq S_u} (-1)^{|M|} = 0 .
    \qedhere
\]
\end{proof}

\begin{thm} \label{thm:ws}
Every piecewise linear function $f\colon\R^n\to\R$ can be written as an integral linear combination
of maxima of at most $n+1$ affine-linear functions. Moreover, the affine-linear functions can be
chosen among the constituents of~$f$.
\end{thm}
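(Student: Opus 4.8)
The plan is to prove Theorem~\ref{thm:ws} by induction on the number of constituents of~$f$, using Lemma~\ref{lem:id} as the reduction mechanism. First I would recall that any piecewise linear function $f$ can be written (by the results cited in the introduction, e.g.\ \cite{Kripfganz_Schulze:87, Wang_Sun:05}) as \emph{some} integral linear combination of maxima of affine-linear functions, and moreover that these affine-linear arguments may be taken among the constituents of~$f$; call $R$ the finite set of constituents appearing. If every max-term already has at most $n+1$ arguments, we are done. Otherwise pick a term $\max_{g\in R_0}g$ with $|R_0|\ge n+2$ for some $R_0\subseteq R$.

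The key step is to apply Lemma~\ref{lem:ST} to $R_0$, obtaining a partition $R_0 = S\uplus T$ with both parts non-empty, and then use Lemma~\ref{lem:id} in the form
\[
  \max_{g\in T}g(x) = \max_{g\in S\cup T}g(x) - \sum_{\emptyset\neq M\subseteq S}(-1)^{|M|}\max_{g\in M\cup T}g(x).
\]
Reading this the other way around, $\max_{g\in S\cup T}g = \max_{g\in R_0}g$ gets rewritten as an integral linear combination of the terms $\max_{g\in M\cup T}g$ over non-empty $M\subseteq S$ together with $\max_{g\in T}g$; since $T\subsetneq R_0$ and each $M\cup T\subsetneq R_0$ (because $M$ omits at least one element of $S$, as $M\subseteq S$ and $S\setminus M\supseteq$ at least... wait, we need $M\cup T\neq R_0$, which holds iff $M\neq S$) — so I would instead isolate $\max_{g\in R_0}g$ (the term $M=S$) on one side, giving
\[
  \max_{g\in R_0}g(x) = -\!\!\sum_{\substack{M\subsetneq S\\ M \text{ or empty}}}\!\!(-1)^{|S|-|M|}\max_{g\in M\cup T}g(x),
\]
so that $\max_{g\in R_0}g$ is an integral linear combination of maxima over the proper subsets $M\cup T\subsetneq R_0$, each of which is a max of fewer affine-linear constituents.

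The induction is then on the sum $\sum$ (over all max-terms in the current representation) of $2^{|R_i|}$, or more simply on the maximum arity appearing with ties broken by the number of top-arity terms: substituting the above identity strictly decreases this measure, since every new max-term has strictly fewer arguments than $|R_0|$, and all arguments remain among the constituents of~$f$ (they lie in $R_0\subseteq R$). Integrality is preserved because the coefficients $(-1)^{|S|-|M|}$ are integers and the original combination was integral. Iterating until no term has more than $n+1$ arguments yields the claim, including the "moreover" part about constituents.

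The main obstacle I anticipate is purely bookkeeping: making the termination measure precise so that the replacement genuinely decreases it (a single high-arity term is replaced by many terms, all of strictly lower arity, so a lexicographic measure on arity works, but one must check the base representation exists with constituents as arguments). A secondary point to verify carefully is that Lemma~\ref{lem:ST} and Lemma~\ref{lem:id} apply to the sub-multiset $R_0$ rather than all of $R$ — but this is immediate since those lemmas only require $|R_0|\ge n+2$ and a partition of $R_0$ itself.
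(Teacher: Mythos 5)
Your proposal follows essentially the same route as the paper: start from a representation of $f$ as an integral linear combination of maxima whose arguments are constituents, then repeatedly apply Lemma~\ref{lem:ST} together with the identity of Lemma~\ref{lem:id} (rearranged to isolate the $M=S$ term, which correctly yields integer coefficients $\pm1$ and maxima over proper subsets of $R_0$) and terminate by a well-founded measure on arities. The only point to tighten is the base step: the difference-of-convex decomposition of \cite{Kripfganz_Schulze:87} does not in general produce constituents of $f$ as arguments, so one should instead invoke the max-min (lattice) representation of \cite{Tarela_Martines:99,Ofchinnikov:02} and convert it via the identity $\max(a,\min(b,c))=\max(a,b)+\max(a,c)-\max(a,b,c)$ from \cite{Wang_Sun:05}, which is exactly what the paper does.
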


\begin{proof}
By \cite{Tarela_Martines:99,Ofchinnikov:02}, the function $f$ can be written as maximum of minima of constituents.
Using the first identity in \cite[Lemma~1]{Wang_Sun:05}, namely
\[ \max(a,\min(b,c)) = \max(a,b)+\max(a,c)-\max(a,b,c) , \]
we can rewrite this expression as a linear combination of maxima of constituents.
By Lemma~\ref{lem:id} and Lemma~\ref{lem:ST}, we may replace any maximum of more than $n+1$
constituents by a linear combination of maxima of fewer constituents.
\end{proof}

\begin{rem}
The proof in this section shows a slightly stronger statement: $f$ can be written as an integral linear combination of maxima
of constituents whose derivatives are affinely independent. We will see in Section~\ref{sec:lb} that the maximum of affine-linear
functions that are affinely independent cannot be expressed as linear combination of maxima of fewer affine-linear functions.
\end{rem}

\begin{figure}
\newcommand{\mx}[3]{$\begin{array}{@{}c@{}}#1m_{#2} \\ \hline {\scriptstyle#3}\end{array}$}
\newcommand{\mxl}[2]{$\vphantom{-}{#1m_{#2}}$}
\centerline{
\begin{tikzpicture}
\node[draw] (n11) at (0,4) {\mx{}{12345}{(4,1,-21,13,3)}};
\node[draw] (n21) at (0,2) {\mx{}{1245}{(9,-10,-11,12)}};
\node[draw] (n31) at (-1.2,0) {\mxl{-}{15}};
\node[draw] (n32) at (0,0) {\mxl{}{125}};
\node[draw] (n33) at (1.2,0) {\mxl{}{145}};
\draw[->] (n11) to (n21);
\draw[->] (n21) to[out=239.0,in=90] (n31);
\draw[->] (n21) to[out=270.0,in=90] (n32);
\draw[->] (n21) to[out=301.0,in=90] (n33);
\begin{scope}[xshift=9.35cm]
\node[draw] (n11) at (0,4) {\mx{}{12345}{(3,-1,-9,4,3)}};
\node[draw] (n21) at (-2.8,2) {\mxl{-}{145}};
\node[draw] (n22) at (0,2) {\mx{}{1245}{(9,-10,-11,12)}};
\node[draw] (n23) at (2.8,2) {\mx{}{1345}{(7,-30,17,6)}};
\node[draw] (n31) at (-1.2,0) {\mxl{-}{15}};
\node[draw] (n32) at (0,0) {\mxl{}{125}};
\node[draw] (n33) at (1.2,0) {\mxl{}{145}};
\node[draw] (n34) at (2.8,0) {\mxl{}{145}};
\draw[->] (n11) to[out=215.5,in=90] (n21);
\draw[->] (n11) to[out=270.0,in=90] (n22);
\draw[->] (n11) to[out=324.5,in=90] (n23);
\draw[->] (n22) to[out=239.0,in=90] (n31);
\draw[->] (n22) to[out=270.0,in=90] (n32);
\draw[->] (n22) to[out=301.0,in=90] (n33);
\draw[->] (n23) to[out=270.0,in=90] (n34);
\end{scope}
\end{tikzpicture}
}
\bigskip

\centerline{
\begin{tikzpicture}
\node[draw] (n11) at (0,5) {\mx{}{12345}{(7,-7,-3,-6,9)}};
\node[draw] (n21) at (-7,2) {\mxl{}{15}};
\node[draw] (n22) at (-5.6,2) {\mxl{-}{125}};
\node[draw] (n23) at (-4,2) {\mxl{-}{135}};
\node[draw] (n24) at (-2.4,2) {\mxl{-}{145}};
\node[draw] (n25) at (0,2) {\mx{}{1235}{(23,-17,-33,27)}};
\node[draw] (n26) at (3.8,2) {\mx{}{1245}{(9,-10,-11,12)}};
\node[draw] (n27) at (6.5,2) {\mx{}{1345}{(7,-30,17,6)}};
\node[draw] (n31) at (-1.2,0) {\mxl{-}{15}};
\node[draw] (n32) at (0,0) {\mxl{}{125}};
\node[draw] (n33) at (1.2,0) {\mxl{}{135}};
\node[draw] (n34) at (2.6,0) {\mxl{-}{15}};
\node[draw] (n35) at (3.8,0) {\mxl{}{125}};
\node[draw] (n36) at (5,0) {\mxl{}{145}};
\node[draw] (n37) at (6.5,0) {\mxl{}{145}};
\draw[->] (n11) to[out=203.2,in=90] (n21);
\draw[->] (n11) to[out=208.2,in=90] (n22);
\draw[->] (n11) to[out=216.9,in=90] (n23);
\draw[->] (n11) to[out=231.3,in=90] (n24);
\draw[->] (n11) to[out=270.0,in=90] (n25);
\draw[->] (n11) to[out=321.7,in=90] (n26);
\draw[->] (n11) to[out=335.2,in=90] (n27);
\draw[->] (n25) to[out=239.0,in=90] (n31);
\draw[->] (n25) to[out=270.0,in=90] (n32);
\draw[->] (n25) to[out=301.0,in=90] (n33);
\draw[->] (n26) to[out=239.0,in=90] (n34);
\draw[->] (n26) to[out=270.0,in=90] (n35);
\draw[->] (n26) to[out=301.0,in=90] (n36);
\draw[->] (n27) to[out=270.0,in=90] (n37);
\end{tikzpicture}
}
\caption{Application of our Algorithm~\ref{alg:lemma_id} ``reduceMax''
%(given by Lemma~\ref{lem:ST} and Lemma~\ref{lem:id})
to the input function $\max(g_1,g_2,g_3,g_4,g_5)$ where
$g_1(x_1,x_2)=3 x_1-4 x_2+1$,
$g_2(x_1,x_2)=-3 x_1-x_2-2$,
$g_3(x_1,x_2)=2 x_1+x_2-1$,
$g_4(x_1,x_2)=3 x_1+2 x_2+2$,
 $g_5(x_1,x_2)=-2 x_1+4 x_2+3$. We use the abbreviation $m_{ij\dots}$ for $\max(g_i,g_j,\dots)$ and for those maxima that get expanded, we also display the vector~$\alpha$ that determines the sets $S$ and~$T$. We show three different executions of the algorithm, which however all yield the same final result (combining equal terms is not shown explicitly here).}
\label{fig:AlgComp}
\end{figure}

\begin{exa}
Let $g_1(x_1,x_2)=c_1x_1+c_2x_2$ for some constants $c_1,c_2$ (not both zero) and let $g_i=g_1+i$ for $i=2,3,4$. Then clearly $\max(g_1,g_2,g_3,g_4)=g_4$. However, this simplest-possible answer cannot be found by our algorithm. If we choose $\alpha=(-1,0,0,1)$, then we get $S=\{g_1,g_2,g_3\}$, $T=\{g_4\}$ and hence
$\max(g_1,g_2,g_3,g_4)=\max(g_1,g_2,g_4)+\max(g_1,g_3,g_4)+\max(g_2,g_3,g_4)-\max(g_1,g_4)-\max(g_2,g_4)-\max(g_3,g_4)+g_4$. In contrast, if we choose $\alpha=(-3,1,1,1)$, then $S=\{g_1\}$, $T=\{g_2,g_3,g_4\}$ and we obtain $\max(g_1,g_2,g_3,g_4)=\max(g_2,g_3,g_4)$ as the final result. However, this weakness can easily be cured by ignoring the condition $|R|\geq n+2$ in Lemma~\ref{lem:ST}: now the existence of a vector~$\alpha$ is not guaranteed any more, but if it exists, we perform the corresponding decomposition, otherwise we leave that term unchanged.
\end{exa}

\section{Reducing the Height}
%How to Find a Linear Combination of Maxima}
\label{sec:exmp}

In this section, we give an algorithm for writing the maximum of any number of affine-linear functions from $\R^n\to\R$ as a linear combination of maxima of at most $n+1$ of these functions.
%Because any piecewise linear function has only finitely many constituents, one can find a linear combination
%as asserted in Theorem~\ref{thm:ws} in two steps. Firstly,
%representing the function $f$ as a linear combination of maxima of constituents. For more information about this transformation, %see~\cite{Tarela_Martines:99,Ofchinnikov:02, Wang_Sun:05}. Secondly, replacing every maximum with more than $n+1$ constituents by using %Lemma~\ref{lem:ST} and Lemma~\ref{lem:id}. After applying those lemmas, every maxima is replaced by a linear combination of maxima with a %number of linear constituents less than $n+2$. The paper provides algorithms for Lemma~\ref{lem:ST} and Lemma~\ref{lem:id}. 

Algorithm~\ref{alg:lemma_id} implements the method \textit{reduceMax}  that takes the size of the input space $n$ and a maximum function $f(x)\coloneqq \max(h_{1}(x), \ldots, h_{k}(x))$ with more than $n+1$ constituents, i.e. $R\coloneqq \{h_{1}, \ldots, h_{k}\}$ and $|R| > n+1$. It returns a linear combination of maxima: 
\[g \coloneqq\sum_{i=1}^{l} c_i\cdot\max(h^{i}_{1}, \ldots, h^{i}_{k_{i}}),\] 
where $\{h_{1}^{i}, \ldots, h_{k_{i}}^{i}\} \subset R$ and $k_{i} \leq n+1$ for all $i \in \{1, \ldots, l\}$. The combination is constructed in the following way. 

Firstly, one extracts the linear constituents $h_{1}, \ldots, h_{k}$ from the input maximum $f$ and forms the set $R$. Then the set $R$ is split into two disjoint subsets $S, T \subset R$ such that $R = S\uplus T$. The \textit{split} operation is the implementation of Lemma~\ref{lem:ST} and is described in Algorithm~\ref{alg:lemma_st}. More details about the \textit{split} operation can be found in the second part of the chapter. After splitting the set $R$ into the pair of subsets $S, T$, the linear combination of maxima $g$ is generated. The linear combination of maxima $g$ has the following form:
\begin{equation*}
g(x) = (-1 )^{|S|+1}\sum_{M \subset S} (-1)^{|M|}\max_{h \in M \cup T}h(x), 
\end{equation*}
where the sum runs over all proper subsets of $S$ including the empty set. By Lemma~\ref{lem:id}, $g(x)$ is equal to the input function $f(x)$, and every maximum in $g(x)$ contains at most $|R| -1$ constituents, where the set $R$ contains all constituents of the input function $f(x)$. If any summand of $g(x)$ contains affinely dependent subset of constituents, one replaces it with the corresponding linear combination of maximum by applying the \textit{reduceMax} method recursively. One repeats this simplification procedure until all the summands contain at most $n+1$ constituents. 

The termination of Algorithm~\ref{alg:lemma_id} follows from the fact that after every call on the input maximum with $k$ constituents, \textit{reduceMax} returns a finite combination of maxima where each maximum contains at most $k-1$ constituents.

Algorithm~\ref{alg:lemma_id} uses the method \textit{split} for dividing a set of constituents $R$ into two disjoint sets $S, T$. The method \textit{split} is described in Algorithm~\ref{alg:lemma_st} and it is an implementation of Lemma~\ref{lem:ST}. The algorithm splits the input set of constituents $R$ based on the sign of the vector $\alpha \in \R^{|R|}$ that is a solution of the system of equalities:
\begin{equation*}
    \sum_{i= 1}^{|R|}\alpha_{i}h'_{i} = 0, \quad\sum_{i = 1}^{|R|}\alpha_{i} = 0,
\end{equation*}
where $h'_{i} \coloneqq
\begin{pmatrix}
  \dfrac{\partial h_{i}}{\partial x_{1}}, \ldots,   \dfrac{\partial h_{i}}{\partial x_{n}}
\end{pmatrix}^{T}$ and $h_{i} \in R$ for all $i \in \{1, \ldots, |R|\}$. Due to the fact, that all constituents are linear, it implies that the system of equations is linear:
\begin{equation*}
    W\alpha = 0,
\end{equation*}
where $\alpha \in \R^{|R|}$ and $W\in \R^{(n+1) \times |R|}$ such that:
\begin{equation*}
  W \coloneqq   
  \begin{bmatrix}
       h'_{11}& h'_{21}& \ldots& h'_{|R|1}\\
        \vdots&  &  &\vdots\\
        h'_{1n} & h'_{2n}&\ldots& h'_{|R|n}\\
        1& 1& \ldots& 1
   \end{bmatrix}. 
\end{equation*}
    Solving the given system of linear equations is equivalent to finding the null space $\ker(W)$. If the null space is trivial, one does not need to divide the input set $R$ because Lemma~\ref{lem:ST} is not applicable. Otherwise,  the vector $\alpha$ can be picked as any vector from $\ker(W)$. By iterating  through the entries of the vector $\alpha$, depending on the sign of the entry $\alpha_{i}$, the corresponding constituent $h_{i}$ is assigned either to $T$ or $S$. Note that the condition $k\geq n+2$ in Lemma~\ref{lem:ST} ensures the existence of a non-trivial null space $\ker(W)$.
%\begin{equation*}
 %   \begin{bmatrix}
  %      h'_{11}& h'_{21}& \ldots& h'_{|R|1}\\
   %     \vdots& \ldots&  \ldots&\vdots\\
    %    h'_{1n} & h'_{2n}&\ldots& h'_{|R|n}\\
  %      1& 1& \ldots& 1
 %   \end{bmatrix} 
 %   \begin{bmatrix}
 %       \alpha_{1} \\
 %       \alpha_{2} \\ 
 %       \vdots\\
 %       \alpha_{|R|}
 %   \end{bmatrix} =  
 %   \begin{bmatrix}
 %       0 \\
  %      0 \\ 
  %      0\\
  %      0
 %   \end{bmatrix}.
%\end{equation*}

%The \textit{reduceMax} method is the implementation of Lemma~\ref{lem:id}. As shown in Lemma~\ref{lem:id} the given maxima is equal to a linear combination of maxima. To find corresponding maxima 

\begin{exa}
\label{ex:alg}
    Let us take the following function: 
    \begin{align*}
        g(x_{1}, x_{2}) \coloneqq \max(&x_1, x_1 + x_2, x_2 + \max(x_1 + x_2 - 7, x_1 + 6x_2 + 4) \\ &+ 3\min(4x_2, x_1 - 9, x_1 - x_2)).
    \end{align*}
    After applying the function expansion by the rules explained in~\cite{Wang_Sun:05} on~$g$, one receives a linear combination of maxima with 49 summands, where 40 of them contain 5 or more constituents. After applying  Algorithm~\ref{alg:lemma_id} on the expanded version of~$g$, the given sum transforms into a new one with only 7 summands:
    \begin{align*}     
   \hat{g}(x_1, x_2) \coloneqq  &\max(x_1, x_1 + 19x_2 + 4) \\& - \max(x_1 + x_2, 4x_1 + 4x_2 + 4, 4x_1 + 7x_2 - 23)  \\
&- \max(x_1 + 19x_2 + 4, 4x_1 + 4x_2 + 4)\\& + \max(x_1, x_1 + 
x_2, 4x_1 + 7x_2 - 23) \\ &- \max(x_1, x_1 + 19x_2 + 4, 4x_1 + 
7x_2 - 23) \\&+ \max(x_1 + x_2, 4x_1 + 4x_2 + 4)\\ &+ \max(x_1 + 19x_2 + 4, 4x_1 + 4x_2 + 4, 4x_1 + 7x_2 - 23).
 \end{align*}
 The  function $\hat{g}$ contains maxima with at most 3 constituents, in accordance with Theorem~\ref{thm:ws}. However, an example can be found where the number of maxima increases after applying Algorithm~\ref{alg:lemma_id}, compared to the output after applying the rules from \cite{Wang_Sun:05}. For instance, it holds for the function:
 \begin{align*}
     f(x_1, x_2) \coloneqq \max(&6 x_1 + 5 x_2 - 3, 8x_2 - 2, -3 x_1 -5x_2 - 4, \\
 &\max(12x_1 - 4x_2 + 1, -7x_1 + 8x_2 + 12) + 3x_1 - 10,\\
 &\min(-3x_1 + 4 x_2 - 5, 8x_1 + 2)).
 \end{align*}
 After repeating two transformations, we receive two linear combinations with the number of maxima 3 and 5, respectively, with the final form:
 \begin{align*}
    \hat{f}(x_1, x_2)\coloneqq & -\max(-4x_1 + 8x_2 + 2, 15x_1 - 4x_2 - 9)  \\ &+\max(8x_2 - 2, -4x_1 + 8x_2 + 2, 6x_1 + 5x_2 - 3) \\
    &+\max(-4x_1 + 8x_2 + 2, -3x_1 - 5x_2 - 4, 15x_1 - 4x_2 - 9) \\
    &+ \max(-4x_1 + 8x_2 + 2, 6x_1+ 5x_2 - 3, 15x_1 - 4x_2 - 9) \\
    &- \max(-4x_1 + 8x_2 + 2, 6x_1 + 5x_2 - 3).
 \end{align*}

Although Algorithm~\ref{alg:lemma_id} can either reduce the number of summands in the final linear combination of maxima or increase it, the final combination seems to be invariant, not depending on the outcome of Algorithm~\ref{alg:lemma_st}. More precisely, we conjecture that the output of Algorithm~\ref{alg:lemma_id} is independent of how  the set of constituents is split by Algorithm~\ref{alg:lemma_st}, as illustrated in Figure~\ref{fig:AlgComp}.
\end{exa}

\begin{algorithm}[!htpp]
    \caption{split (Lemma~\ref{lem:ST})}
    \label{alg:lemma_st}
    \KwInput{$\{h_{1}, \ldots, h_{k}\}$ -- set of linear constituents, 
    where $h_{i}\colon\R^{n} \rightarrow \R, \forall i \in \{1, \ldots, k\}$}
    \KwOutput{$(S,T)$ -- disjoint decomposition of $R$}
    $W \coloneqq 
    \begin{pmatrix}
        \begin{pmatrix}
            h'_{1}\\ 1 
    \end{pmatrix}, \ldots, 
    \begin{pmatrix}
            h'_{k}\\ 1
    \end{pmatrix}
    \end{pmatrix} $\\
    $K \coloneqq \ker(W)$\\
    \If {$K$ is zero}{
        \KwReturn{$\emptyset, \{h_{1}, \ldots, h_{k}\}$}
    }
    
    $\alpha \coloneqq \text{pick a non-zero vector from }K$\\
    $v \coloneqq \text{constant terms of } h_{1}, \ldots, h_{k}$\\
    $S \coloneqq \emptyset$\\
    $T \coloneqq \emptyset$ \\
    $c \coloneqq v^{T}\alpha$\\
    \For{$i \in \{1, \ldots, k\}$}{
    \If{$\alpha_{i} > 0$}{
    $T\coloneqq T \cup \{h_{i}\}$
    }\Else{
    $S \coloneqq S \cup \{h_{i}\}$
    }
    }
    \If {$c > 0$}{
    \KwReturn{$S,T$}
    }
    \KwReturn{$T,S$}
\end{algorithm}

\begin{algorithm}[!htpp]
    \caption{reduceMax (Lemma~\ref{lem:id})}
    \label{alg:lemma_id}
    \KwInput{$g \coloneqq \max(h_{1}, \ldots, h_{k})$ -- max function, where $g \colon\R^{n}\rightarrow \R$.}
    \KwOutput{Linear combination of max functions.}
    $R \coloneqq \{h_{1}, \ldots, h_{k}\}$ \\
      $(S, T)\coloneqq \text{split} (R)$ \\
      $A \coloneqq \max(T)$ \\
      \If {$S=\emptyset$}{   
      \KwReturn{$A$}
      }
      \If{$|T|>1$}{
      $A \coloneqq \text{reduceMax}(A)$
      }   
      \For{all proper subsets $M$ of $S$}{
      $P\coloneqq M \cup T$\\
        $A \coloneqq A + (-1)^{|M|}\text{reduceMax}(\max(P))$
      }
    \KwReturn{$(-1)^{|S|+1}A$}

\end{algorithm}

%solving a system of linear equations in $\sum_{i=1}^{n+1}\binom{N}{i}$ indeterminate
%coefficients, where $N$ is the number of constituents. However it is not yet clear how to obtain this linear system algorithmically, this requires more study. - The proof itself is also constructive and gives rise to an algorithm, but we have no reason to think that would be faster.

\section{Duality and Convex Polyhedra}
\label{sec:duality}

Let $n$ be a positive integer. Instead of all piecewise linear functions on $\R^n$, we consider in this section the subset
of convex and positively homogeneous piecewise linear functions of degree~1, i.e., all functions $f\colon\R^n\to\R$ that are convex and
satisfy $f(\lambda x)=\lambda f(x)$ for all $x\in\R^n$ and $\lambda\ge 0$. We denote this subset by ${\cal F}_n$.
There is a useful bijective correspondence $\tau\colon{\cal F}_n\to {\cal P}_n$, where we define ${\cal P}_n$ as the set of all
non-empty convex polytopes in~$\R^n$. For each $f\in{\cal F}_n$, we define $\tau(f)$ as the subset of all vectors $h\in\R^n$ such that
$\langle h,x\rangle\le f(x)$ for all $x\in\R^n$. Conversely, if $P\in{\cal P}_n$, then $\tau^{-1}(P)$ is the support function
$x\mapsto \sup\{\langle h,x\rangle\mid h\in P\}$. By compactness, the supremum is a maximum.

For two polytopes $P,Q\in{\cal P}_n$, the Minkowski sum $P+Q$ is defined as the convex polytope $\{a+b\mid a\in P, b\in Q\}$ as illustrated in Figure~\ref{fig:tau_rep}.
Let $d\in\R^n\setminus\{0\}$ be a vector ($d$ stands for ``direction''). For any polytope $P\in{\cal P}_n$, we define the face
\[
  S_d(P) = \{ x\in P\mid \langle x,d\rangle = \max_{y\in P}\,\langle y,d\rangle \} .
\]

\begin{prop} \label{prop:madd}
Let $P,Q\in{\cal P}_n$ be polytopes. Let $d\in\R^n\setminus\{0\}$ be a direction vector. Then
\begin{enumerate}
\renewcommand{\labelenumi}{\alph{enumi})}
\item The map $\tau$ is an isomorphism of semigroups:
  \[
  \tau^{-1}(P)+\tau^{-1}(Q)=\tau^{-1}(P+Q).
  \]
\item Taking faces is an endomorphism of semigroups:
  \[
  S_d(P+Q)=S_d(P)+S_d(Q).
  \]
\end{enumerate}
\end{prop}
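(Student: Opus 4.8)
The plan is to prove the two statements of Proposition~\ref{prop:madd} by reducing both to elementary properties of support functions and linear functionals.

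\textbf{Part a).} First I would recall that, by definition, $\tau^{-1}(P)(x)=\max_{h\in P}\langle h,x\rangle$ and similarly for $Q$ and $P+Q$. For a fixed $x\in\R^n$, I would compute
\[
  \tau^{-1}(P+Q)(x) = \max_{a\in P,\,b\in Q}\langle a+b,x\rangle
  = \max_{a\in P}\langle a,x\rangle + \max_{b\in Q}\langle b,x\rangle
  = \tau^{-1}(P)(x)+\tau^{-1}(Q)(x),
\]
where the middle equality holds because $\langle a+b,x\rangle=\langle a,x\rangle+\langle b,x\rangle$ splits as a sum of terms depending only on $a$ and only on $b$, so the joint maximum factors. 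Since this holds for every $x$, the functions are equal. To finish, I should note that $\tau$ is indeed a bijection ${\cal F}_n\to{\cal P}_n$ (stated in the excerpt), so this identity says precisely that $\tau^{-1}$, hence $\tau$, is a semigroup isomorphism, where the operation on ${\cal F}_n$ is pointwise addition and on ${\cal P}_n$ is Minkowski sum. One small point worth checking is that the sum of two functions in ${\cal F}_n$ is again in ${\cal F}_n$ (convexity and positive homogeneity are preserved under addition), and dually that $P+Q\in{\cal P}_n$ when $P,Q\in{\cal P}_n$ (Minkowski sum of non-empty convex polytopes is a non-empty convex polytope).

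\textbf{Part b).} Here I would argue directly with the definition of $S_d$. Write $m_P=\max_{y\in P}\langle y,d\rangle$, $m_Q=\max_{y\in Q}\langle y,d\rangle$, and $m_{P+Q}=\max_{y\in P+Q}\langle y,d\rangle$. By the same splitting argument as in part~a), $m_{P+Q}=m_P+m_Q$. Now take any $x\in S_d(P+Q)$ and write $x=a+b$ with $a\in P$, $b\in Q$. Then $\langle a,d\rangle+\langle b,d\rangle = m_P+m_Q$, while $\langle a,d\rangle\le m_P$ and $\langle b,d\rangle\le m_Q$; hence both inequalities must be equalities, so $a\in S_d(P)$ and $b\in S_d(Q)$, giving $x\in S_d(P)+S_d(Q)$. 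Conversely, if $a\in S_d(P)$ and $b\in S_d(Q)$, then $\langle a+b,d\rangle=m_P+m_Q=m_{P+Q}$ and $a+b\in P+Q$, so $a+b\in S_d(P+Q)$. This establishes $S_d(P+Q)=S_d(P)+S_d(Q)$. Calling this ``an endomorphism of semigroups'' amounts to observing that $S_d$ maps ${\cal P}_n$ into itself (a face of a non-empty convex polytope is again one) and respects Minkowski sum, which is exactly the displayed identity.

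I do not expect a serious obstacle here; both parts are essentially the observation that a linear functional evaluated on a Minkowski sum decomposes additively, so the maximum over the sum decouples. The only thing requiring a little care is being explicit that the relevant sets stay within ${\cal F}_n$ and ${\cal P}_n$ respectively, so that the word ``semigroup'' in the statement is justified; this is routine but should be stated. If one wanted extra rigor, one could also remark that in part~b) the decomposition $x=a+b$ of a point of $P+Q$ need not be unique, but the argument shows every such decomposition has $a\in S_d(P)$, $b\in S_d(Q)$, which is all that is needed.
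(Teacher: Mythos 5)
Your proof is correct and complete. The paper itself does not prove this proposition---it simply cites \cite[Lemma 2.1.4]{Gritzmann_Sturmfels:93}---so your direct argument (the separability of $\langle a+b,x\rangle$ making the maximum over $P+Q$ decouple, and the equality-forcing argument for faces) supplies exactly the standard proof that the citation points to; nothing is missing, and your remarks about closure of ${\cal F}_n$ and ${\cal P}_n$ under the respective additions and about the non-uniqueness of the decomposition $x=a+b$ are appropriate points of care.
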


\begin{proof}
See \cite[Lemma 2.1.4]{Gritzmann_Sturmfels:93}.
\end{proof}

\begin{prop} \label{prop:cancel}
Minkowski addition is cancellable: if $A+C=B+C$, then $A=B$.
\end{prop}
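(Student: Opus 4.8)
The plan is to deduce cancellation directly from the semigroup isomorphism $\tau$ of Proposition~\ref{prop:madd}(a). The key observation is that, although $(\mathcal{P}_n,+)$ is only a commutative monoid with no inverses, its image under $\tau^{-1}$ sits inside the additive \emph{group} of real-valued functions on $\R^n$, where cancellation is automatic. So the strategy is to transport the identity $A+C=B+C$ to the function side, cancel there, and transport back.

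Concretely, I would first apply $\tau^{-1}$ to the hypothesis $A+C=B+C$. By Proposition~\ref{prop:madd}(a), $\tau^{-1}$ is a homomorphism for Minkowski addition, so this gives the identity of functions
\[
  \tau^{-1}(A)+\tau^{-1}(C)=\tau^{-1}(B)+\tau^{-1}(C).
\]
Because $C$ is a polytope, hence compact, its support function $\tau^{-1}(C)\colon\R^n\to\R$ is finite at every point — this is exactly the compactness remark made just before Proposition~\ref{prop:madd}. Therefore I may subtract the real number $\tau^{-1}(C)(x)$ from both sides for each $x\in\R^n$, obtaining $\tau^{-1}(A)=\tau^{-1}(B)$ as functions. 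Since $\tau$ is a bijection $\mathcal{F}_n\to\mathcal{P}_n$, its inverse is injective, and we conclude $A=B$.

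I do not expect a genuine obstacle here; the one point that must not be skipped is the finiteness of $\tau^{-1}(C)$, which is what legitimizes the pointwise subtraction and is precisely where boundedness of $C$ (a polytope, not an unbounded polyhedron) enters. As a sanity check in the other direction, cancellation fails for unbounded closed convex sets — adding a ray, say, absorbs translates — so the restriction to $\mathcal{P}_n$ is essential. If one preferred to avoid the duality, an alternative route is R\aa dstr\"om's cancellation lemma (if $A+C\subseteq B+C$ with $B$ convex and $C$ bounded, then $A\subseteq B$), applied once in each direction; but the support-function argument above is the shortest given the machinery already set up.
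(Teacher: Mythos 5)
Your proof is correct and follows essentially the same route as the paper: transport the identity to support functions via Proposition~\ref{prop:madd}(a), cancel the real-valued function $\tau^{-1}(C)$ pointwise, and use injectivity of $\tau$ to conclude. The extra remarks on finiteness of the support function and the failure of cancellation for unbounded sets are sound but not needed beyond what the paper records.
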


\begin{proof}
This is well-known, and we can prove it easily by translation to functions. Assume $A+C=B+C$.  Then
\[ \tau^{-1}(A)+\tau^{-1}(C)=\tau^{-1}(A+C)=\tau^{-1}(B+C)=\tau^{-1}(B)+\tau^{-1}(C), \]
hence $\tau^{-1}(A)=\tau^{-1}(B)$ and therefore $A=B$.
\end{proof}

The face $S_d(P)$ is contained in the hyperplane $H_{d,c}=\{x \mid \langle x,d\rangle = c\}$, where $c:=\max_{y\in P}\,\langle y,d\rangle$.
For the induction proof in the next section, we need to identify $H_{d,c}$ with $\R^{n-1}$. To this end, we translate the hyperplane
to $H_{d,0}$ by a translation vector $cv$ that satisfies $\langle d,v\rangle=-1$, and then we apply an isomorphism $\phi\colon H_{d,0}\to\R^{n-1}$.
The face $\phi(S_d(P)+cv)$ is denoted by $F_d(P)$. It depends not only on $d$, but also on the choice of $v$ and $\phi$; 
but we may choose $v_d$ and $\phi_d$ for every $d$ once and for all, subject to the condition $v_{-d}=-v_d$ and $\phi_d=\phi_{-d}$.

\begin{figure}
\begin{center}
\setlength{\tabcolsep}{20pt}
\begin{tabular}{@{}ccc@{}}
\includegraphics[width=0.25\textwidth]{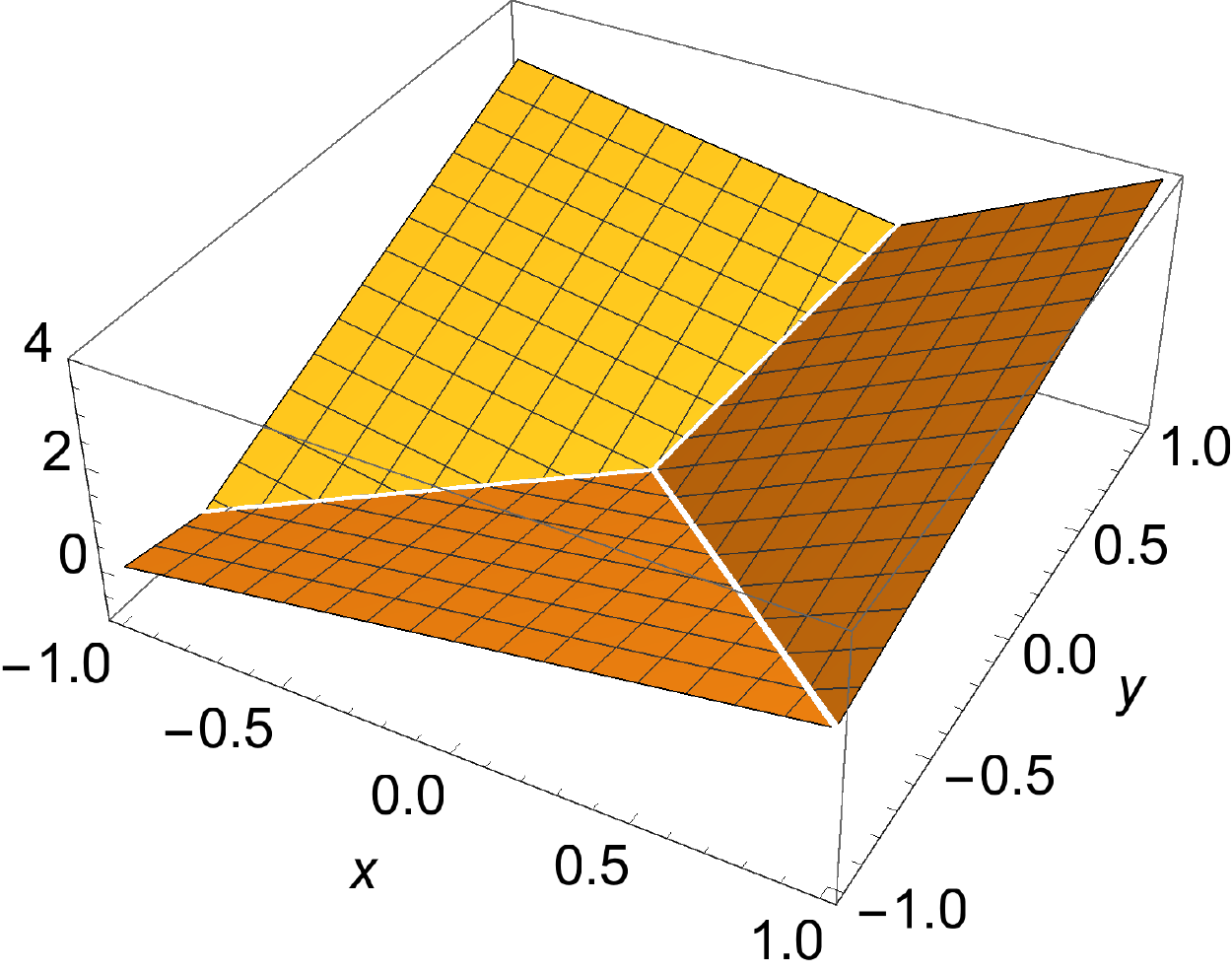} &
\includegraphics[width=0.25\textwidth]{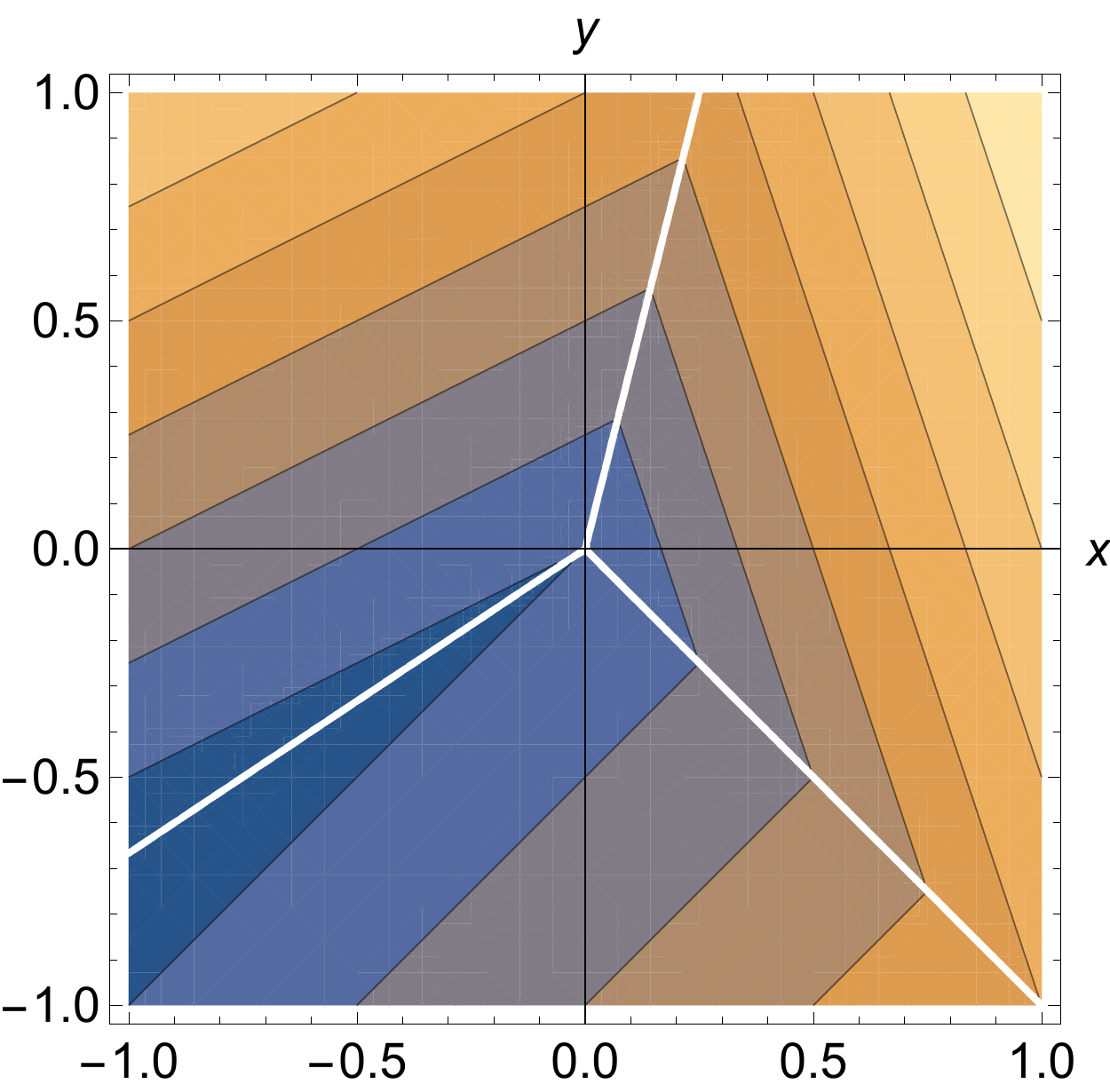} &
\includegraphics[width=0.25\textwidth]{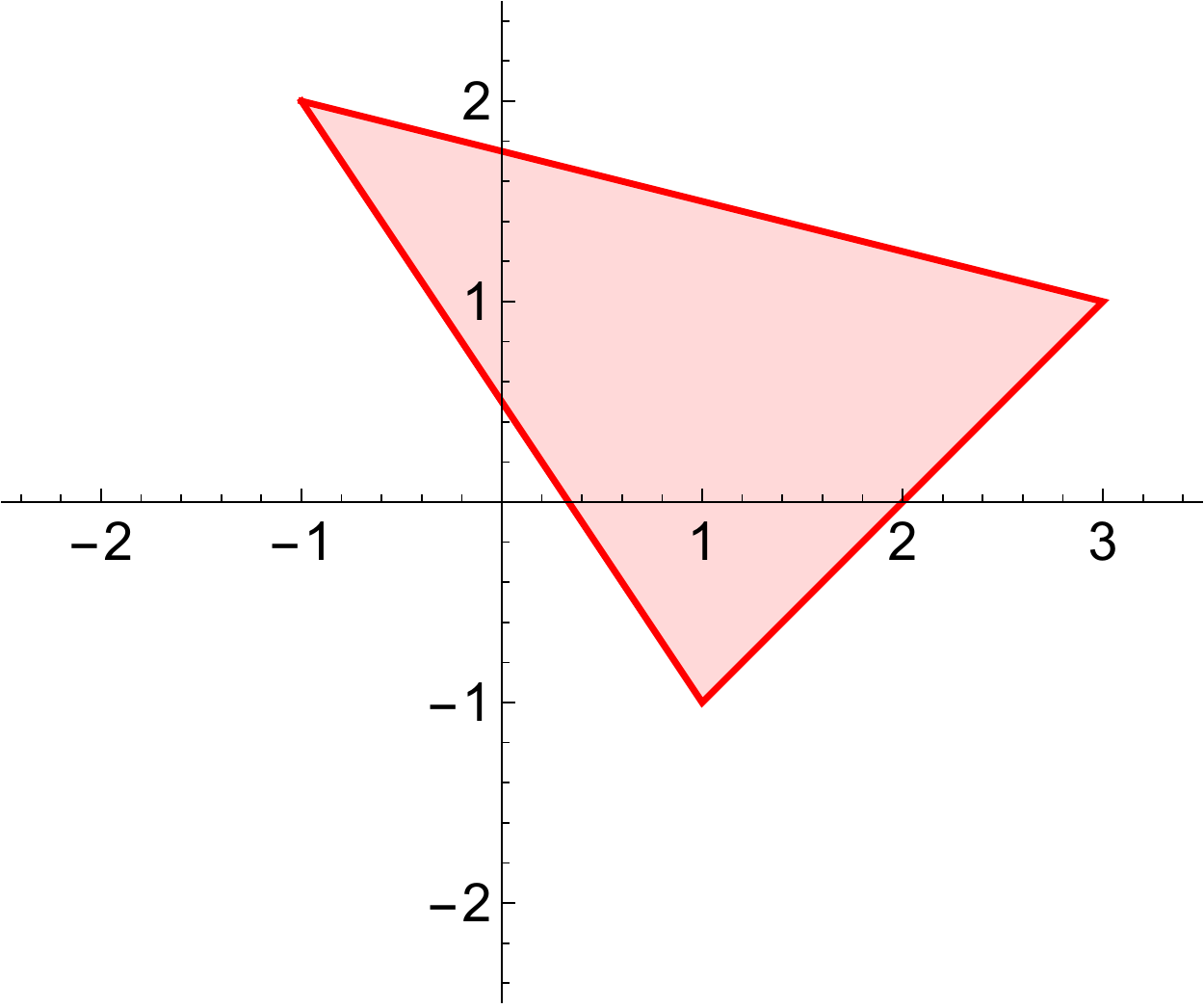} \\[12pt]
\includegraphics[width=0.25\textwidth]{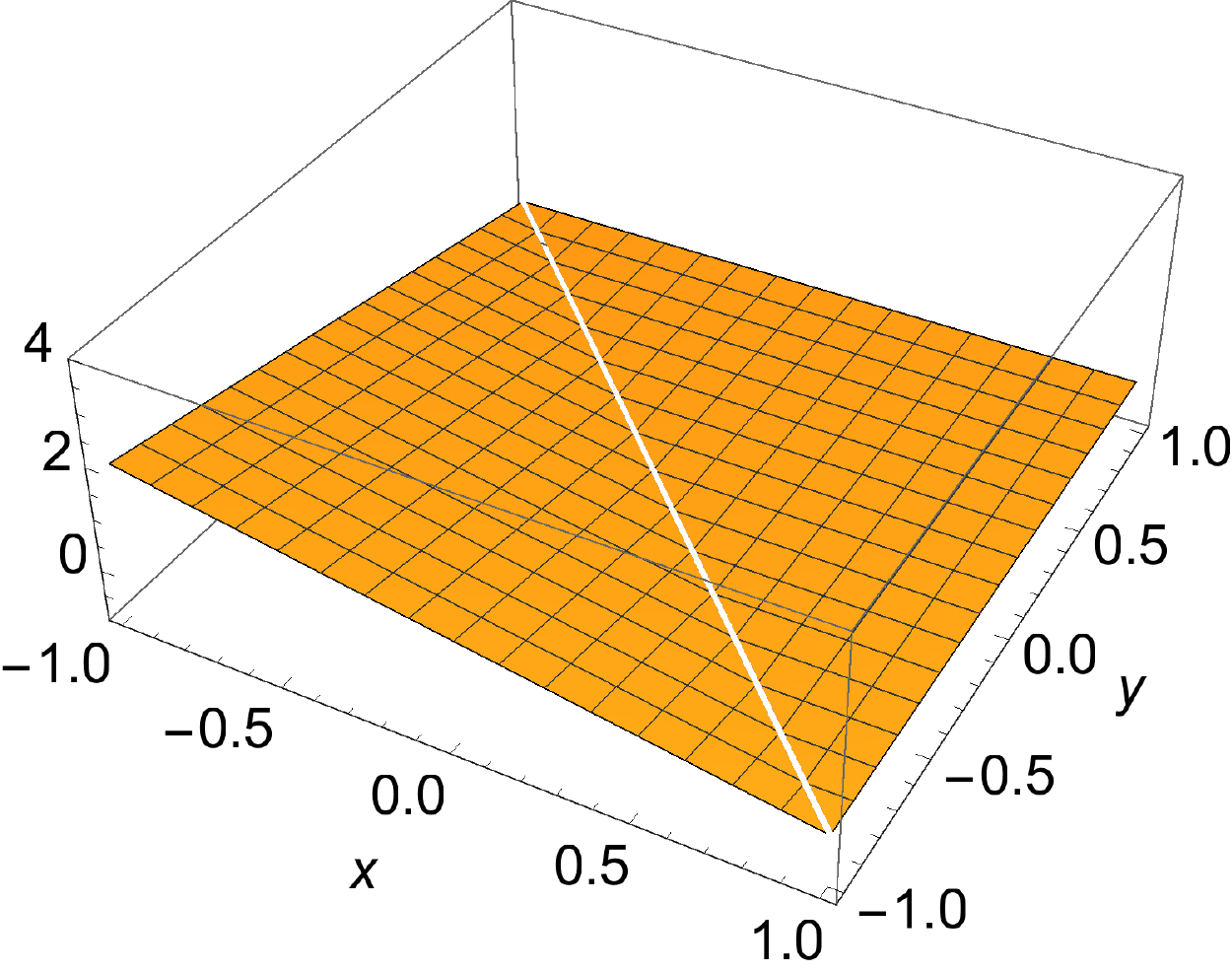} &
\includegraphics[width=0.25\textwidth]{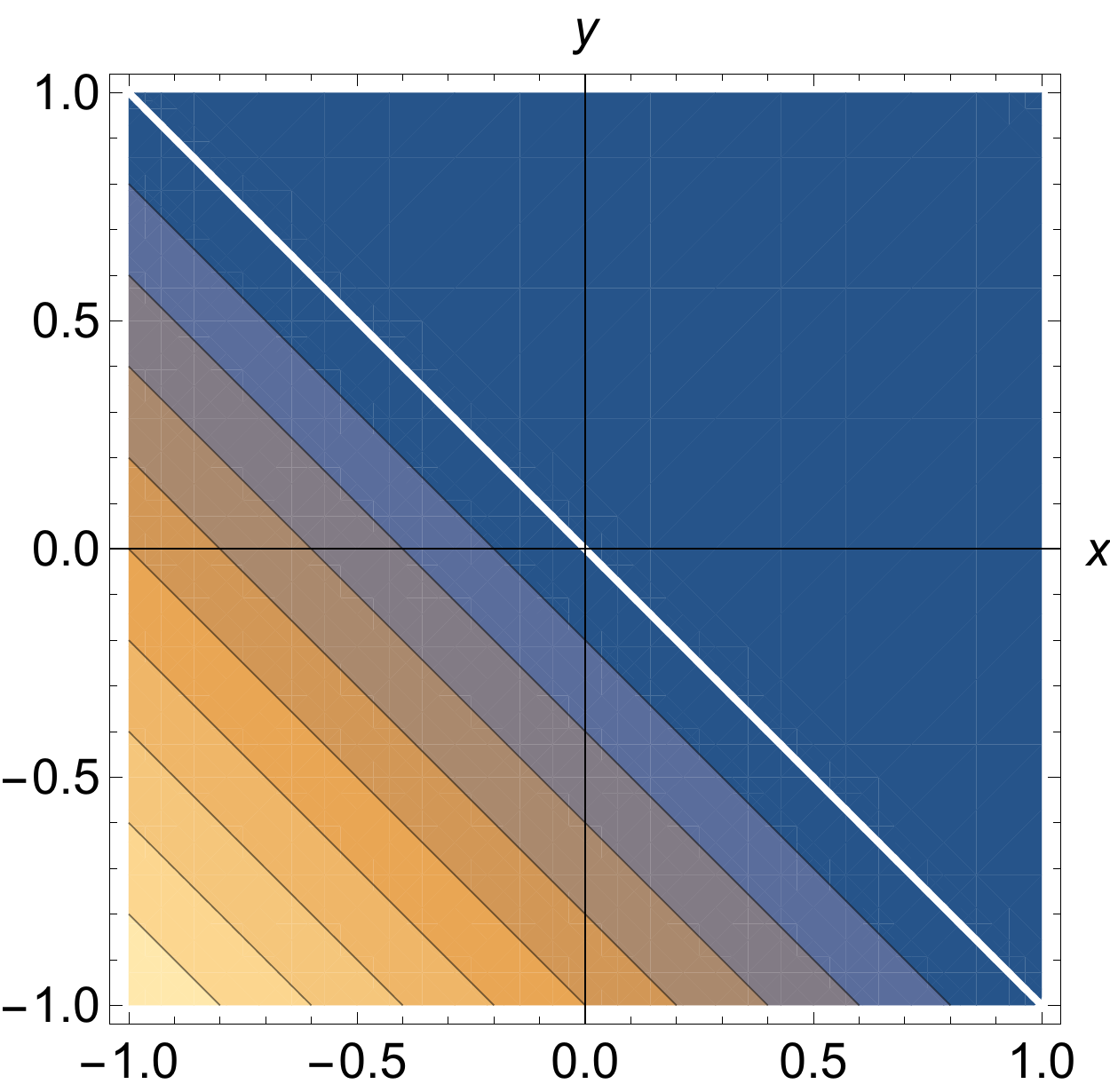} &
\includegraphics[width=0.25\textwidth]{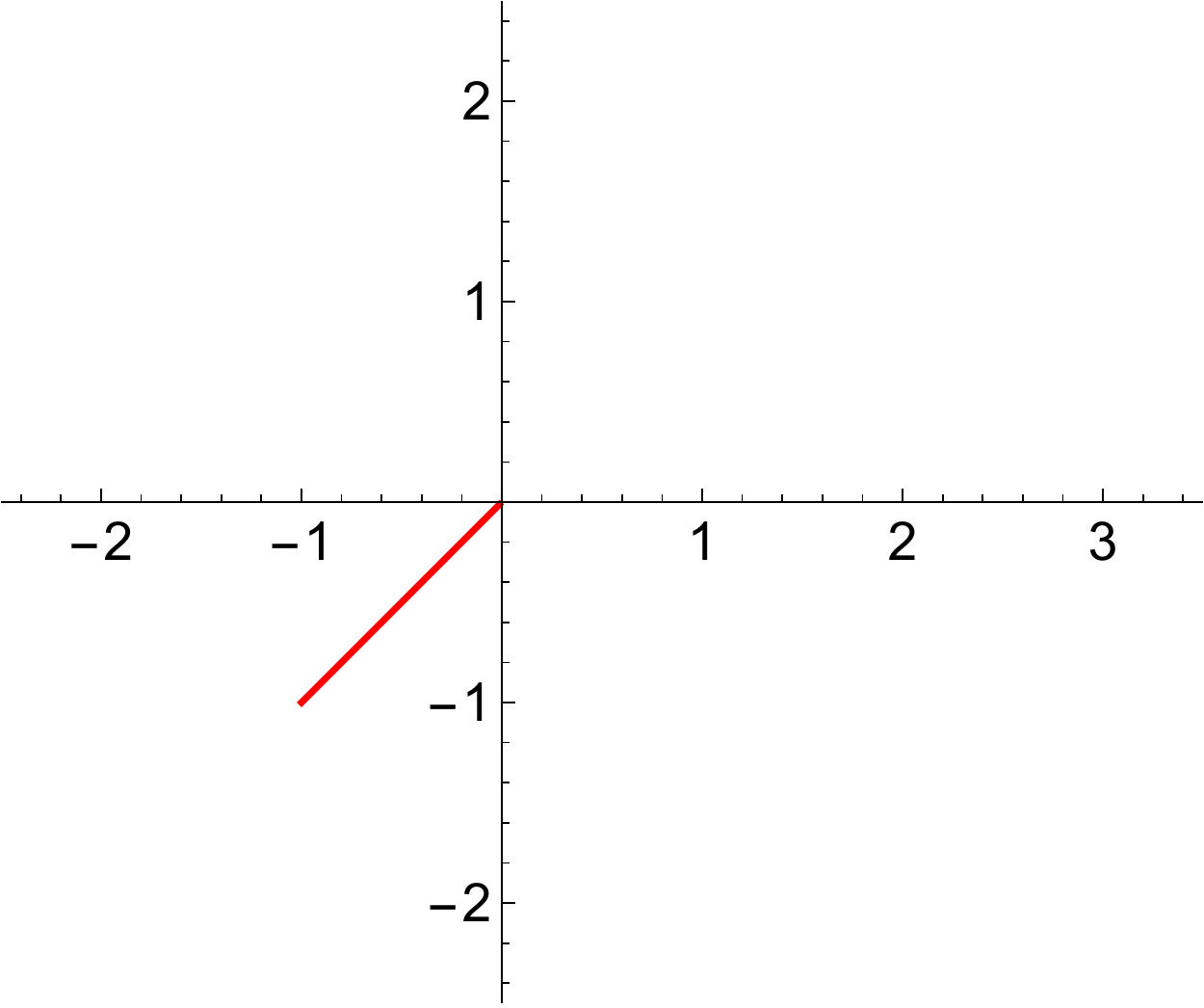} \\[12pt]
\includegraphics[width=0.25\textwidth]{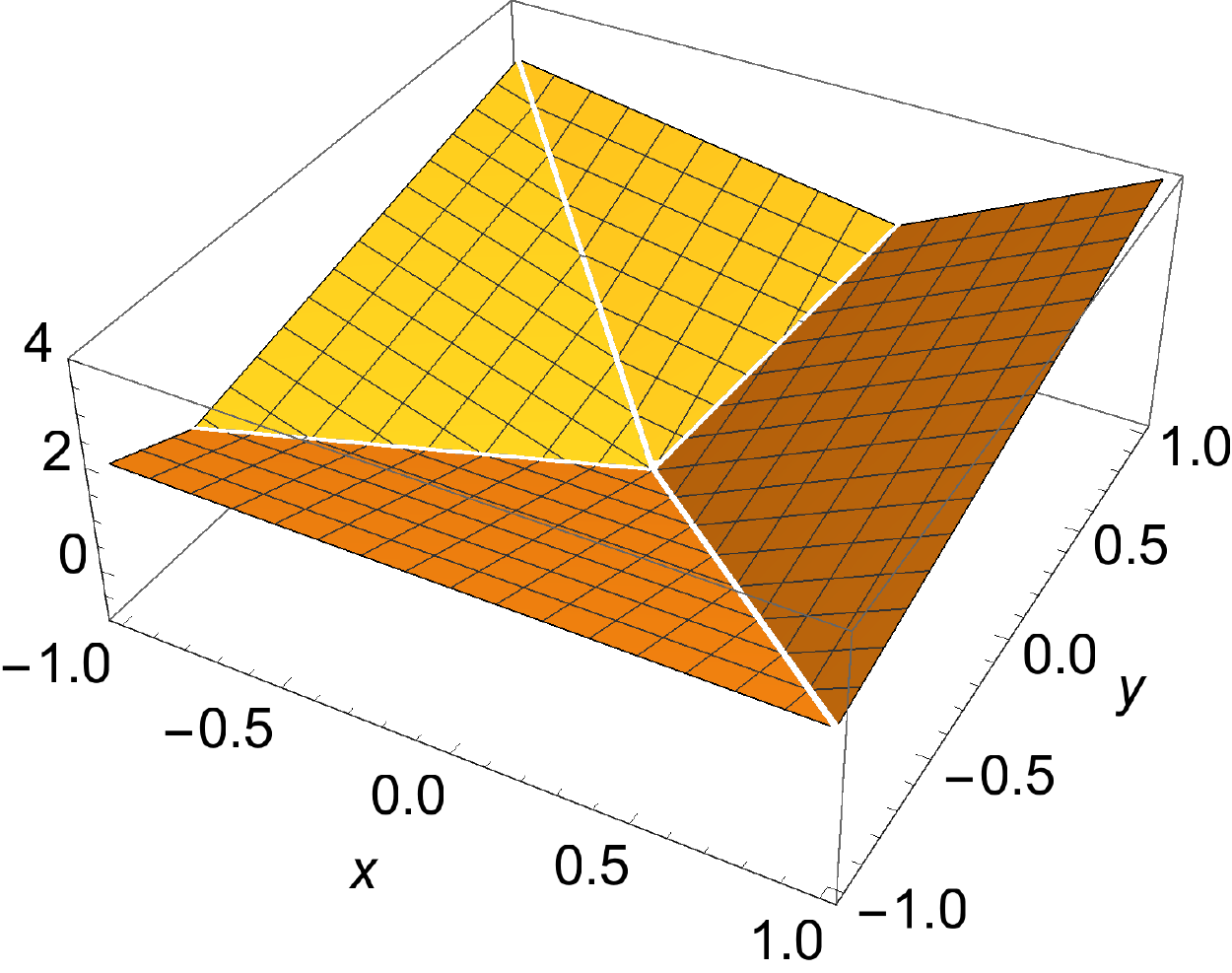} &
\includegraphics[width=0.25\textwidth]{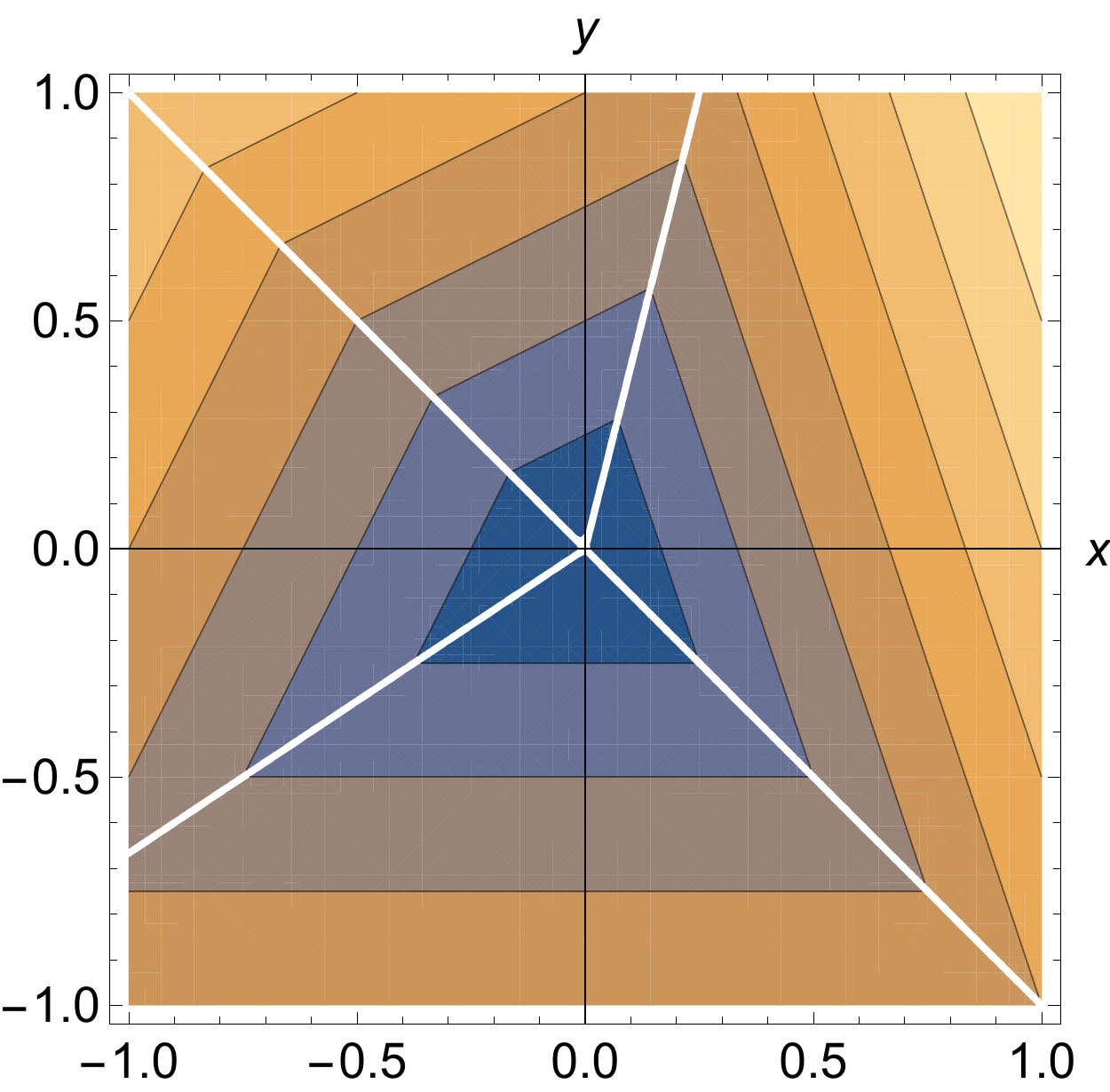} &
\includegraphics[width=0.25\textwidth]{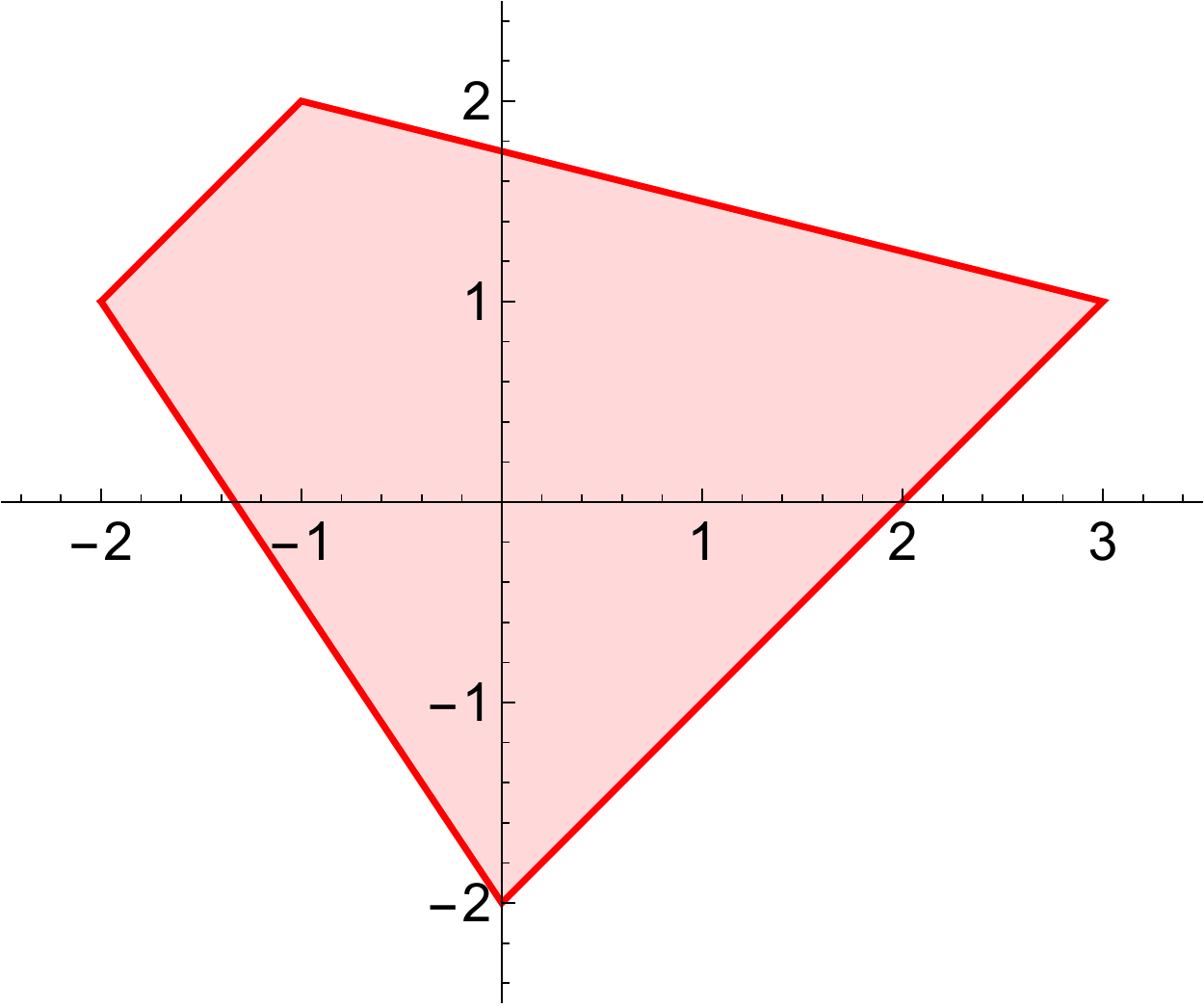}
\end{tabular}
\caption{Two positively homogeneous piecewise linear functions, $f(x,y)=\max(x-y, 3x+y, -x+2y)$ (first row) and $g(x,y)=\max(0,-x-y)$
(second row), as well as their sum $h=f+g$ (third row). Each row shows the function $\R^2\to\R$ as a three-dimensional plot (first column) and as a contour plot (second column). The third column shows the corresponding polygon $\tau(f)$ etc. in~${\cal P}_n$. One clearly sees how $h(x,y)=\max(-2y, -2x+y, 3x+y, -x+2y)$ produces the Minkowski sum $\tau(h)=\tau(f)+\tau(g)$ of polygons.}
\label{fig:tau_rep}
\end{center}
\end{figure}

\section{A Lower Bound for the Height}
\label{sec:lb}

Recall that an $n$-simplex is a polytope which is the convex hull of $n+1$ points that are affinely independent. The faces of simplices
are again simplices.

A polytope $P\in{\cal P}_n$ is called a {\em zero volume} polytope if and only if it has no interior points; this is the case if and only if
it is contained in a hyperplane. If $P$ has zero volume and $d\in\R^n\setminus\{0\}$ is a direction vector, then one of the following two cases holds:
\begin{itemize}
\item either $S_d(P)=S_{-d}(P)=P$, 
\item or both faces $F_d(P)$ and $F_{-d}(P)$ have zero volume as polytopes in ${\cal P}_{n-1}$.
\end{itemize}

We say that a polytope $P\in{\cal P}_n$ is a {\em zero-summand} if and only if there are convex polytopes $P_1,\dots,P_r,Q_1,\dots,Q_s\in{\cal P}_n$ 
of zero volume such that $P+P_1+\dots+P_r=Q_1+\dots+Q_s$, as illustrated in Figure~\ref{fig:zerosmnd}.

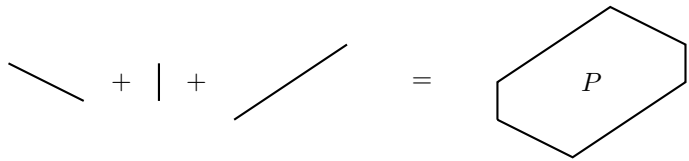
\begin{figure}
\begin{center}
\begin{tikzpicture}[scale=0.5]
    \draw[-,thick] (-13,1.5) -- (-11,0.5);
    \node (p1) at (-10,1) {$+$};
    \draw[-,thick] (-9,0.5) -- (-9,1.5);
    \node (p2) at (-8,1) {$+$};
    \draw[-,thick] (-7,0) -- (-4,2);
    \node (eq) at (-2,1) {$=$};
    \draw[-,thick] (0,0) -- (0,1) -- (3,3) -- (5,2) -- (5,1) -- (2,-1) -- (0,0);
    \node (p) at (2.5,1) {$P$};
\end{tikzpicture}
\caption{The polytope~$P$ is a zero-summand, since it can be written as a Minkowski sum of three line segments.}
\label{fig:zerosmnd}
\end{center}
\end{figure}

\begin{lem} \label{lem:goingdown}
Let $P\in{\cal P}_n$ be a zero-summand. Let $d\in\R^n\setminus\{0\}$ be a direction vector such that $F_{-d}(P)\in{\cal P}_{n-1}$ has zero volume.
Then $F_{d}(P)$ is also a zero-summand.
\end{lem}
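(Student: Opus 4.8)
The plan is to transport the given zero-summand relation for $P$ along the face map $F_d$, exploiting that this map respects Minkowski addition. The first step is to record that $P\mapsto F_d(P)$ is a homomorphism of semigroups $({\cal P}_n,+)\to({\cal P}_{n-1},+)$: writing $c_d(P):=\max_{y\in P}\langle y,d\rangle$ one has $c_d(P+Q)=c_d(P)+c_d(Q)$, and combining this with $S_d(P+Q)=S_d(P)+S_d(Q)$ (part~(b) of Proposition~\ref{prop:madd}) and the fact that $\phi_d$, taken linear, carries Minkowski sums to Minkowski sums, one gets $F_d(P+Q)=F_d(P)+F_d(Q)$; the same argument applies to $F_{-d}$.

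Now fix a zero-summand relation $P+P_1+\dots+P_r=Q_1+\dots+Q_s$ with all $P_i,Q_j$ of zero volume, and apply both $F_d$ and $F_{-d}$ to it. The obstruction is that $F_d$ need not preserve zero volume: by the dichotomy recalled just before the lemma, a zero-volume polytope $Z$ is either \emph{$d$-flat}, i.e.\ $S_d(Z)=S_{-d}(Z)=Z$ (so $Z$ lies in a hyperplane orthogonal to $d$), or else both $F_d(Z)$ and $F_{-d}(Z)$ have zero volume. So I would split the indices of the $P_i$ into a $d$-flat part $I_{\mathrm f}$ and a rest $I_{\mathrm g}$, and likewise split the $Q_j$ into $J_{\mathrm f}$ and $J_{\mathrm g}$. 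The key small observation is that for $d$-flat $Z$ the conventions $v_{-d}=-v_d$, $\phi_{-d}=\phi_d$ force $F_{-d}(Z)=F_d(Z)$ (both equal $\phi_d(Z+c_d(Z)v_d)$). Setting $A:=\sum_{i\in I_{\mathrm f}}F_d(P_i)$ and $B:=\sum_{j\in J_{\mathrm f}}F_d(Q_j)$, the $F_d$-image of the relation is $F_d(P)+A+D_1=B+D_2$ and the $F_{-d}$-image is $F_{-d}(P)+A+C_1'=B+C_2'$, where $D_1,D_2,C_1',C_2'$ are Minkowski sums of zero-volume polytopes in ${\cal P}_{n-1}$. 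Here the hypothesis enters: since $F_{-d}(P)$ has zero volume, $C_1:=F_{-d}(P)+C_1'$ is again a Minkowski sum of zero-volume polytopes, so the second relation reads $A+C_1=B+C_2$ with $C_2:=C_2'$.

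It remains to eliminate the potentially full-dimensional terms $A$ and $B$: adding $C_1$ to the first relation and substituting $A+C_1=B+C_2$ gives $F_d(P)+B+C_2+D_1=B+C_1+D_2$, and cancelling $B$ by Proposition~\ref{prop:cancel} yields
\[ F_d(P)+C_2+D_1=C_1+D_2 . \]
Since $C_1,C_2,D_1,D_2$ are all Minkowski sums of zero-volume polytopes in ${\cal P}_{n-1}$ (empty sums read as $\{0\}$), this exhibits $F_d(P)$ as a zero-summand. The main difficulty is precisely isolating and disposing of the $d$-flat zero-volume summands, whose $F_d$-images may be full-dimensional in $\R^{n-1}$; the two-sided argument succeeds because those images occur identically on the $F_d$- and $F_{-d}$-side and hence cancel, while the hypothesis on $F_{-d}(P)$ is exactly what turns the one remaining dangerous term into an admissible zero-volume summand.
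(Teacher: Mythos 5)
Your proof is correct and takes essentially the same route as the paper: apply both $F_d$ and $F_{-d}$ to the zero-summand relation, exploit that $F_{-d}=F_d$ on the $d$-flat zero-volume summands so that their possibly full-dimensional images occur identically in the two image relations, and then remove them with Proposition~\ref{prop:cancel}, using the hypothesis on $F_{-d}(P)$ to absorb that term into the zero-volume part. The only difference is bookkeeping (the paper adds the two image relations and equates them via their common value, while you substitute one into the other), and your explicit checks that $F_d$ is a Minkowski-sum homomorphism and that $F_{-d}(Z)=F_d(Z)$ for $d$-flat $Z$ make the argument, if anything, more complete than the original.
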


\begin{proof}
Since $P\in{\cal P}_n$ is a zero summand, there exist convex polytopes $P_1,\dots,P_r$, $Q_1,\dots,Q_s\in{\cal P}_n$ of zero volume such that 
\[ P+P_1+\dots+P_r=Q_1+\dots+Q_s. \] 
We may assume, without loss of generality, that the zero volume polytopes $P_i$ or $Q_j$ that have a face $F_d(P_i)=F_{-d}(P_i)$ of nonzero volume 
-- or similarly with $Q_j$ -- are $P_1,\dots,P_k$ and $Q_1,\dots,Q_l$, for some $k\le r$ and $l\le s$.
Set $A:=P+P_1+\ldots+P_r$; therefore also $A=Q_1+\dots+Q_s$. We apply $F_d$ to $A$ and neglect all polytopes of zero volume.
This yields 
\[ F_d(A)=F_d(P)+F_d(P_1)+\dots+F_d(P_k)\] plus polytopes of zero volume, and
\[ F_d(A)=F_d(Q_1)+\ldots+F_d(Q_l) \] plus polytopes of zero volume.
Now we apply $F_{-d}$ to $A$ and neglect all polytopes of zero volume, taking into account the equations $F_d(P_i)=F_{-d}(P_i)$ for $i=1,\dots,k$ 
and $F_d(Q_j)=F_{-d}(Q_j)$ for $j=1,\dots,l$, and the assumption that $F_{-d}(P)\in{\cal P}_{n-1}$ has zero volume.
This yields 
\[ F_{-d}(A)=F_d(P_1)+\dots+F_d(P_k) \]
plus polytopes of zero volume, and 
\[ F_{-d}(A)=F_d(Q_1)+\dots+F_d(Q_l) \]
plus polytopes of zero volume.
Summing up, we get that $F_d(A)+F_{-d}(A)$ is equal to both sides 
of the equation 
\[F_d(P)+F_d(P_1)+\dots+F_d(P_k)+F_d(Q_1)+\dots+F_d(Q_l) \]
\[ = F_d(P_1)+\dots+F_d(P_k)+F_d(Q_1)+\dots+F_d(Q_l) \]
modulo polytopes of zero volume.
By Proposition~\ref{prop:cancel}, it follows that $F_d(P)$ is a zero summand.
\end{proof}

\begin{cor} \label{cor:n0summand}
An $n$-simplex in ${\cal P}_n$ is not a zero summand.
\end{cor}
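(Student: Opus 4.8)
The plan is to prove this by induction on~$n$, using Lemma~\ref{lem:goingdown} as the inductive step to descend from an $n$-simplex to an $(n-1)$-simplex. The base case is $n=0$ (or $n=1$, whichever is more convenient): a single point in ${\cal P}_0$ is its own unique decomposition, and the only zero-volume polytopes available are points, so a point can only be a zero-summand in the trivial sense; one checks directly that a $0$-simplex (a point) is not a zero-summand in the sense that would cause trouble, or equivalently one starts the induction at $n=1$ where a segment is not a Minkowski sum involving a strictly "smaller-dimensional" piece that reduces its length to zero --- a segment has positive length and zero-volume polytopes in ${\cal P}_1$ are points, so $P + (\text{points}) = (\text{points})$ forces a length identity that fails. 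I would phrase the base case carefully so that "zero volume in ${\cal P}_0$" makes sense (every polytope in ${\cal P}_0$ is a point, hence has zero volume, so the statement for $n=0$ needs the convention that a point is genuinely a zero-summand only trivially --- it is cleanest to begin the induction at $n=1$).

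For the inductive step, suppose every $(n-1)$-simplex in ${\cal P}_{n-1}$ is not a zero-summand, and let $\Delta$ be an $n$-simplex in ${\cal P}_n$. Assume for contradiction that $\Delta$ is a zero-summand. The key observation is that $\Delta$, being an $n$-simplex, has $n+1$ facets, each of which is an $(n-1)$-simplex; moreover for a generic direction $d$, the face $S_d(\Delta)$ is a vertex, and I need to choose $d$ so that $F_{-d}(\Delta)$ has zero volume in ${\cal P}_{n-1}$ while $F_d(\Delta)$ is a genuine $(n-1)$-simplex. Concretely: pick $d$ to be an outer normal of one of the facets~$\sigma$ of~$\Delta$, so that $S_d(\Delta)=\sigma$ is an $(n-1)$-simplex; the opposite face $S_{-d}(\Delta)$ is then the single vertex of~$\Delta$ not on~$\sigma$, which is a point and hence has zero volume in ${\cal P}_{n-1}$ after the identification~$F_{-d}$. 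By Lemma~\ref{lem:goingdown}, $F_d(\Delta)$ is then a zero-summand; but $F_d(\Delta)$ is (an affine image, via $\phi_d$ and the translation, of) the facet~$\sigma$, which is an $(n-1)$-simplex --- here I must check that the affine identification $\phi_d$ composed with translation by $cv_d$ sends the simplex~$\sigma$ to an $(n-1)$-simplex, which holds because affine isomorphisms preserve affine independence of vertices. This contradicts the induction hypothesis, completing the argument.

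The main obstacle I anticipate is the bookkeeping around the definition of $F_d$ and making sure "zero-summand" and "zero volume" behave correctly under the identification $H_{d,c}\cong\R^{n-1}$ --- in particular, verifying that $F_{-d}(\Delta)$ really is a zero-volume polytope in ${\cal P}_{n-1}$ (it is a single point, so this is fine) and that $F_d(\Delta)$ really is an $(n-1)$-simplex rather than some degenerate image. A second, smaller point to be careful about is that Lemma~\ref{lem:goingdown} requires $P$ itself to be a zero-summand and $F_{-d}(P)$ to have zero volume; both hypotheses are arranged by the choice of~$d$ above, so the lemma applies cleanly. I would also remark that the strengthened form noted in the Remark after Theorem~\ref{thm:ws} --- that the maxima in the decomposition have affinely independent derivatives --- is exactly what connects this corollary back to the lower bound: the support function of an $n$-simplex is $\max(0,x_1,\dots,x_n)$ up to affine change of coordinates, and a nontrivial representation as a linear combination of maxima of $\le n$ affine-linear functions would, via the duality~$\tau$ of Section~\ref{sec:duality}, exhibit the $n$-simplex as a zero-summand, contradicting this corollary.
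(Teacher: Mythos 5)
Your proposal is correct and follows essentially the same route as the paper: induction on $n$ with the segment-versus-points argument at $n=1$, and the inductive step via Lemma~\ref{lem:goingdown} using a direction $d$ for which $F_{-d}(\Delta)$ is a vertex and $F_d(\Delta)$ is an $(n-1)$-simplex. Your explicit choice of $d$ as a facet normal and the check that the identification $\phi_d$ preserves the simplex structure are details the paper leaves implicit, but the argument is the same.
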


\begin{proof}
Induction on $n$: if $n=1$, then the zero volume polytopes are single points,
and therefore the zero summands are also single points. 
But a 1-simplex is a line segment of positive length and not a single point.

If $\Delta\in{\cal P}_n$ is an $n$-simplex for some $n>1$, then there is a direction vector $d$ such that $F_{-d}(\Delta)$ is a point and $F_d(\Delta)$
is an $(n-1)$-simplex in ${\cal P}_{n-1}$. By the induction hypothesis, $F_d(\Delta)$ is not a zero summand. Also, $F_{-d}(\Delta)$ has zero volume. By Lemma~\ref{lem:goingdown}, applied in
contraposition, it follows that $\Delta$ is not a zero summand.
\end{proof}

\begin{lem} \label{lem:almost}
Let $g_0,\dots,g_n\colon\R^n\to\R$ be linear functions whose derivatives are affinely independent. Then the positively homogeneous function
$\max(g_0,\dots,g_n)$ is not a linear combination of maxima of at most $n$ linear functions.
\end{lem}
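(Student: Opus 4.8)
The plan is to pass to the dual picture of Section~\ref{sec:duality} and reduce the statement to Corollary~\ref{cor:n0summand}. Writing each linear function as $g_i(x)=\langle a_i,x\rangle$, I would first observe that $\max(g_0,\dots,g_n)$ is convex and positively homogeneous of degree~$1$, hence lies in $\mathcal{F}_n$, and that in fact $\max(g_0,\dots,g_n)=\tau^{-1}(\Delta)$ where $\Delta=\operatorname{conv}\{a_0,\dots,a_n\}$: the supremum of a linear functional over a polytope is attained at a vertex, and every vertex of $\Delta$ is one of the $a_i$. Since the derivatives $a_i$ are affinely independent, $\Delta$ is an $n$-simplex. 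In the same way, a maximum of at most $n$ linear functions $\max_j\langle b_j,x\rangle$ equals $\tau^{-1}(R)$ for $R=\operatorname{conv}\{b_1,\dots,b_m\}$ with $m\le n$; such an $R$ lies in an affine subspace of dimension at most $m-1\le n-1$, so it is a zero-volume polytope. I will also use the elementary facts that $\tau^{-1}(cP)=c\,\tau^{-1}(P)$ for $c>0$ and that scaling by a positive factor preserves zero volume.

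Assume now, for contradiction, that $\max(g_0,\dots,g_n)=\sum_{i=1}^{l}c_i\psi_i$ where each $\psi_i=\tau^{-1}(R_i)$ is a maximum of at most $n$ linear functions and each $c_i\neq 0$. Separating the coefficients by sign and moving the negative terms to the other side yields an identity in $\mathcal{F}_n$,
\[
  \tau^{-1}(\Delta)+\sum_{c_i<0}(-c_i)\,\psi_i \;=\; \sum_{c_i>0}c_i\,\psi_i ,
\]
where I add the zero function $\tau^{-1}(\{0\})$ to either side if needed to avoid an empty sum. Applying the semigroup isomorphism $\tau$ of Proposition~\ref{prop:madd}(a) turns both sides into Minkowski sums and gives
\[
  \Delta+\sum_{c_i<0}(-c_i)\,R_i \;=\; \sum_{c_i>0}c_i\,R_i ,
\]
in which every polytope $(-c_i)R_i$ and $c_iR_i$ (and the padding point $\{0\}$) has zero volume. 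By definition, this says precisely that the $n$-simplex $\Delta$ is a zero-summand.

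This contradicts Corollary~\ref{cor:n0summand}, which asserts that an $n$-simplex in $\mathcal{P}_n$ is not a zero-summand, and the lemma follows. The only points that will require attention are the sign bookkeeping — handled by the split above, which works for arbitrary real coefficients thanks to the positive-scalar invariance of $\mathcal{F}_n$ and of zero volume — and the harmless degenerate case where one side is an empty list, handled by padding with the point $\{0\}$; I do not expect any genuine obstacle here, since all of the real work has already been carried out in Lemma~\ref{lem:goingdown} and Corollary~\ref{cor:n0summand}.
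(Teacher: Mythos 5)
Your proposal is correct and follows essentially the same route as the paper: dualize via $\tau$, observe that $\tau(\max(g_0,\dots,g_n))$ is an $n$-simplex while each $\tau(f_i)$ is a zero-volume polytope (convex hull of at most $n$ points), separate terms by sign to exhibit the simplex as a zero-summand, and contradict Corollary~\ref{cor:n0summand}. The only cosmetic difference is that the paper first normalizes the coefficients to $\pm 1$ (absorbing positive scalars into the maxima), whereas you keep general coefficients and use that positive scaling of polytopes preserves zero volume — both are fine.
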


\begin{proof} 
Assume, indirectly, that $\max(g_0,\dots,g_n)=\alpha_1f_1+\dots+\alpha_kf_k$, where $f_1,\dots,f_k$ are maxima of at most $n$ linear functions.
Without loss of generality, we may assume that the $\alpha_i$ are either $1$ or $-1$.
Let us assume that $\alpha_1=\dots=\alpha_l=1$ and $\alpha_{l+1}=\dots=\alpha_k=-1$.
Hence $\max(g_0,\dots,g_n)+f_{l+1}+\dots+f_k=f_1+\dots+f_l$. 
By Proposition~\ref{prop:madd}, we obtain
\[
  \tau(\max(g_0,\dots,g_n)) + P_{l+1}+\dots +P_k = P_1+\dots+P_l .
\]
For $i=1,\dots,k$, the function $f_i$ is in ${\cal F}_n$, and $P_i:=\tau(f_i)$ is a
zero volume polytope because it is the convex hull of at most $n$ points. 
This shows that $\tau(\max(g_0,\dots,g_n))$ is a zero summand. But this contradicts Corollary~\ref{cor:n0summand}, because $\tau(\max(g_0,\dots,g_n))$ is an $n$-simplex.
\end{proof}

\begin{thm}
The function $f:=\max(0,x_1,\dots,x_n)\colon\R^n\to\R$ is not a linear combination of
maxima of less than $n+1$ affine-linear functions.
\end{thm}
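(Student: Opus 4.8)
The plan is to reduce the affine-linear case of the theorem to the homogeneous-linear statement already established in Lemma~\ref{lem:almost}. Note first that $f=\max(0,x_1,\dots,x_n)$ is itself positively homogeneous of degree~$1$, and that its arguments $0,x_1,\dots,x_n$ are \emph{linear} functions on $\R^n$ whose derivatives $0,e_1,\dots,e_n\in(\R^n)^\ast$ are affinely independent: the $n$ differences $e_i-0=e_i$ form a basis of $(\R^n)^\ast$. Thus $f$ already satisfies the hypotheses of Lemma~\ref{lem:almost}, and the only thing separating us from the conclusion is that the theorem permits the $\max$-arguments in a hypothetical representation to be affine-linear rather than linear.

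To bridge this, I would argue by contradiction. Suppose $f=\sum_{i=1}^k\alpha_i f_i$, where each $f_i=\max(\ell_{i,1},\dots,\ell_{i,m_i})$ is a maximum of $m_i\le n$ affine-linear functions. Write $\ell_{i,j}=c_{i,j}+\ell_{i,j}'$, with $c_{i,j}\in\R$ the constant term and $\ell_{i,j}'$ the homogeneous-linear part. For $t>0$ consider the rescaled function $x\mapsto\tfrac1t f(tx)$. Since $f$ is positively homogeneous, $\tfrac1t f(tx)=f(x)$ for all $t>0$; on the other hand $\tfrac1t\ell_{i,j}(tx)=\tfrac{c_{i,j}}{t}+\ell_{i,j}'(x)\to\ell_{i,j}'(x)$ as $t\to\infty$, uniformly on compact sets. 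Because finite maxima and finite linear combinations are continuous, taking $t\to\infty$ in $\tfrac1t f(tx)=\sum_i\alpha_i\,\tfrac1t f_i(tx)$ yields
\[
  f(x)=\sum_{i=1}^k\alpha_i\,\max\bigl(\ell_{i,1}'(x),\dots,\ell_{i,m_i}'(x)\bigr)\qquad\text{for all }x\in\R^n .
\]
Hence $\max(0,x_1,\dots,x_n)$ would be a linear combination of maxima of at most $n$ homogeneous-linear functions, contradicting Lemma~\ref{lem:almost} applied to $g_0=0,g_1=x_1,\dots,g_n=x_n$. This contradiction proves the theorem.

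I do not expect a real obstacle here: all of the substance sits in Corollary~\ref{cor:n0summand} and Lemma~\ref{lem:almost}, and what remains is only the routine ``linearization at infinity'' (passage to recession functions), which turns each affine-linear piece into its linear part while leaving the positively homogeneous target $f$ unchanged. The single point deserving a line of care is the interchange of the limit $t\to\infty$ with the finite sum and the finite maxima, which is immediate from continuity of these operations; and the verification that $0,x_1,\dots,x_n$ have affinely independent derivatives, which is the remark recorded after Theorem~\ref{thm:ws}.
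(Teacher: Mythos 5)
Your proof is correct, and it rests on exactly the same key lemma as the paper's (Lemma~\ref{lem:almost}, hence ultimately Corollary~\ref{cor:n0summand}); the only divergence is in how the affine-linear case is reduced to the linear one. You pass to the recession function: since $f$ is positively homogeneous, $f(x)=\tfrac1t f(tx)$ for all $t>0$, and letting $t\to\infty$ sends each constant term $c_{i,j}/t$ to zero, leaving a linear combination of maxima of at most $n$ genuinely linear functions; continuity of finite maxima and finite sums justifies the interchange of limits, as you note. The paper instead localizes at the origin: for each summand $m_i$ it discards the arguments that do not attain the value $m_i(\vec 0)$, subtracts that value from the remaining ones to make them linear, observes that the resulting combination agrees with $f$ on a small neighborhood of the origin (using $\sum_i\alpha_i m_i(\vec 0)=f(\vec 0)=0$), and then upgrades this local identity to a global one because both sides are positively homogeneous. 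The two reductions are equally valid and of comparable length; yours avoids the bookkeeping of which arguments are active at the origin and the choice of the neighborhood $U$, while the paper's keeps tighter control over which linear functions appear in the resulting homogeneous representation. Your verification that $0,x_1,\dots,x_n$ have affinely independent derivatives is also correct and is indeed the hypothesis needed to invoke Lemma~\ref{lem:almost}.
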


\begin{proof}
By Lemma~\ref{lem:almost}, the function $f$ is not a linear combination of maxima of less than $n+1$ linear functions:
Assume, indirectly, that there are integers $s,k_1,\dots,k_s$ with $k_i\le n$ for $i=1,\dots,s$, real numbers $\alpha_1,\dots,\alpha_s$ and linear functions $g_{i,j}\colon\R^n\to\R$, $i=1,\dots,s$, $j=1,\dots,k_i$ such that 
\[ m_i:=\max_{j=1,\dots,k_i}(g_{i,j})\mbox{ and } f=\sum_{i=1}^s\alpha_i m_i . \]
We will then construct a representation of $f$ as a linear combination of maxima of less than $n+1$ linear functions, giving a contradiction.

For $i=1,\dots,s$, we proceed as follows. We define $c_i:=m_i(\zero)$. 
We may assume without loss of generality that there exists $r_i\le k_i$ such that $g_{i,j}(\zero)=c_i$ if $j\le r_i$ and $g_{i,j}(\zero)$ if $j>r_i$. 
For $j=1,\dots,r_i$, we set $h_{i,j}:=g_{i,j}-c_i$. Then $h_{i,j}(\zero)=0$, which implies that the functions $h_{i,j}$ are all linear. 
Now we set
\[ n_i:=\max_{j=1,\dots,r_i}(h_{i,j})\mbox{ and } e=\sum_{i=1}^s\alpha_i n_i . \]
Then $e$ is a linear combination of maxima of less than $n+1$ linear functions. We will prove that $e=f$, which will finish the indirect proof.

Let $U$ be a small neighborhood of $\zero$ such that for each $i$, we have

\[ \max_{j=1,\dots,k_i}g_{i,j}=\max_{j=1,\dots,r_i}g_{i,j}= \max_{j=1,\dots,r_i}(h_{i,j}+c_i) = \left(\max_{j=1,\dots,r_i}h_{i,j}\right)+c_i \]
inside $U$. Then we have $m_i=n_i-c_i$ inside $U$ and therefore
\[ f = \sum_{i=1,\dots,s}\alpha_i(n_i+c_i)=e+\sum_{i=1}^s c_i.\]
However, $\sum_{i=1}^s c_i=f(\zero)=0$, and therefore $f=e$ inside $U$. Both functions $f$ and $e$ are positively homogeneous; so, if they coincide in $U$, then they coincide everywhere.
\end{proof}

\section{Conclusion} 

It has been shown that any piecewise linear function $f \colon \R^{n} \rightarrow \R$ can be represented as a linear combination of maxima with at most $n+1$ arguments, where the linear arguments of each maximum are picked from the set of affine-linear parts of the function~$f$. We develop an algorithm for calculating this representation. It is an open question, whether the derived representation is invariant under certain choices that can be made inside Algorithm~\ref{alg:lemma_id}. After running a series of experiments, as illustrated in Example~\ref{ex:alg}, we conjecture that this is the case. Proving this conjecture could be a possible direction for future research.

By proving that the function $\max(0, x_{1}, \ldots, x_{n})$ is not a linear combination of maxima of less than $n+1$ affine-linear functions, we confirm the optimal representation conjecture formulated by Shuning Wang and Xusheng Sun in~\cite{Wang_Sun:05}. Using these two contributions, we can state that every piecewise linear function~$f$ can be expressed as a ReLU neural network with at most $\lceil \log_{2}(n+1) \rceil +1$ layers. This network can be derived from the function representation provided by Algorithm~\ref{alg:lemma_id}. Each hidden layer of this neural network contains less than or equal to $2n + N$ neurons, where $n$ is the dimension of the input space, and where $N$ is the number of maxima in the obtained representation.

\bibliographystyle{amsalpha}
\bibliography{main}

\end{document}